\theoremstyle{plain}
\newtheorem{theorem}{Theorem}
\newtheorem{corollary}[theorem]{Corollary}
\newtheorem{lemma}[theorem]{Lemma}
\newtheorem{proposition}[theorem]{Proposition}
\theoremstyle{definition}
\newtheorem{definition}{Definition}
\newtheorem{example}[definition]{Example}
\newtheorem{remark}[definition]{Remark}
\newcommand{\sS}{{\sf S}}
\newcommand{\onezero}{{\underline{10}\,}}
\newcommand{\zeroone}{{01\,}}
\newcommand{\ins}[1]{{\color{blue}#1}}
\DeclareMathAlphabet{\mathbfsl}{OT1}{ppl}{b}{it} 
\newcommand{\ve}{\mathbfsl{e}}
\newcommand{\va}{\mathbfsl{a}}
\newcommand{\vf}{\mathbfsl{f}}
\newcommand{\vu}{\mathbfsl{u}}
\newcommand{\vx}{\mathbfsl{x}}
\newcommand{\vy}{\mathbfsl{y}}
\newcommand{\vz}{\mathbfsl{z}}
\newcommand{\vA}{\mathbfsl{A}}
\newcommand{\vB}{\mathbfsl{B}}
\newcommand{\vC}{\mathbfsl{C}}
\newcommand{\vU}{\mathbfsl{U}}
\newcommand{\vV}{\mathbfsl{V}}
\newcommand{\vX}{\mathbfsl{X}}
\newcommand{\vY}{\mathbfsl{Y}}
\newcommand{\vZ}{\mathbfsl{Z}}
\newcommand{\cC}{\mathcal{C}}
\newcommand{\cD}{\mathcal{D}}
\newcommand{\cE}{\mathcal{E}}
\newcommand{\cF}{\mathcal{F}}
\newcommand{\cX}{\mathcal{X}}
\newcommand{\dec}{{\tt DEC}}
\newcommand{\fail}{{\tt FAILURE}}
\newcommand{\etal}{{\em et al.}}
\newcommand{\todo}[1]{{\color{red} TODO: #1}}
\title{Sequence Reconstruction Problem for Deletion Channels: A Complete Asymptotic Solution}
\author{
 \IEEEauthorblockN{
 	Van Long Phuoc Pham,
 	Keshav Goyal, and
	Han Mao Kiah 
	}
	\IEEEauthorblockA{School of Physical and Mathematical Sciences, Nanyang Technological University, Singapore} 
 \IEEEauthorblockA{Emails: 
 	phuoc002, keshav002, hmkiah@ntu.edu.sg
 	\vspace{-5mm}}
 	}
\begin{document}
\date{}

\maketitle

\hspace*{-15pt}
\begin{abstract}
Transmit a codeword $\vx$, that belongs to 
an $(\ell-1)$-deletion-correcting code of length $n$, over a $t$-deletion channel for some $1\le \ell\le t<n$.
Levenshtein, in 2001, proposed the problem of determining $N(n,\ell,t)+1$, the minimum number of distinct channel outputs required to uniquely reconstruct $\vx$. 
Prior to this work, $N(n,\ell,t)$ is known only when $\ell\in\{1,2\}$.
Here, we provide an asymptotically exact solution for all values of $\ell$ and $t$. 
Specifically, we show that $N(n,\ell,t)=\binom{2\ell}{\ell}/(t-\ell)! n^{t-\ell} - O(n^{t-\ell-1})$ and 
in the special instance where $\ell=t$, we show that $N(n,\ell,\ell)=\binom{2\ell}{\ell}$.
We also provide a conjecture on the exact value of $N(n,\ell,t)$ for all values of $n$, $\ell$, and $t$.
\end{abstract}

\section{Introduction}

The {\em sequence reconstruction problem} \cite{Levenshtein.2001}, introduced by Levenshtein in 2001, considers a communication scenario where the sender transmits a codeword $\vx$ from some codebook $\cC$ over a number of noisy channels. 
The receiver then obtains all the noisy channel outputs and attempts to reconstruct the transmitted codeword $\vx$. 
So, for a fixed codebook $\cC$ and channel, our task is to determine the minimum number of channels that are required for unique reconstruction.
While the sequence reconstruction 
problem was first motivated by applications in biology and chemistry, the problem has received renewed interest because of certain emerging data storage media. These modern storage media relies on technologies that provide users with multiple cheap, noisy reads and examples include DNA-based data storage \cite{Church.etal:2012,Goldman.etal:2013,Yazdi.etal:2015b,Organick2017,Lenz.2019} and racetrack memories \cite{Parkin.2008,Chee.2018}.

In this work, we focus on channels that introduce {\em deletions}. 
Formally, when a word of length $n$ is sent through a $t$-deletion channel, a subsequence of length $n-t$ is received. 
A $t$-deletion correcting code $\cC$ is then a subset of length-$n$ binary words such that for any codeword $\vx \in \cC$, we are able to uniquely identify $\vx$ from any length-$(n-t)$ subsequence of $\vx$.
In his seminal work \cite{Levenshtein.2001,Levenshtein.2001.jcta}, Levenshtein studied the sequence reconstruction problem for the $t$-deletion channel. 
For the case where $\cC$ is the set of all binary sequences, Levenshtein determined the minimum number of channel outputs required for unique reconstruction. 
However, when $\cC$ is a $\ell$-deletion-correcting code where $\ell\ge 1$, little results are known.
Only recently, Gabrys and Yaakobi \cite{Gabrys.2018} solved the sequence reconstruction problem for the $t$-deletion channel when $\cC$ is a single-deletion-correcting codes and 
in the paper, they noted that the problem remains open for the case where $\cC$ is an $\ell$-deletion correcting codes for $\ell\ge 2$.

However, little progress was made on this open problem. Nevertheless, there was a slew of related results.
The sequence reconstruction problem was solved in the following instances:
\begin{itemize}
	\item when the channel involves insertion errors only and the codebook is any $e$-insertion error-correcting
	code \cite{Sala.2017};
	\item when the channel involves combinations of single substitution and
	single insertion error and the codebook comprises of all binary words \cite{Abu-Sini.2021}. 
\end{itemize}
Recently, in \cite{Cai.2021.recon,Chrisnata.2020,Chrisnata.2021}, the authors study the problem of {\em code design} under such scenarios.
Specifically, they fix the number of channel outputs available to the receiver and
design codes that allow the receiver to uniquely reconstruct the transmitted codeword.

In this work, we revisit the open problem posed by Levenshtein~\cite{Levenshtein.2001}(and later by Gabrys and Yaakobi~\cite{Gabrys.2018}) and provide an {\em asymptotic solution for all values of $\ell\ge 2$}. 
Specifically, let $1 \le \ell \le t$ and we transmit codewords from an $(\ell-1)$-deletion-correcting code over a $t$-deletion channel. 
In this work, we first show that the number of channels required for unique reconstruction is upper bounded by the quantity $\binom{2\ell}{\ell}n^{t-\ell}/(t-\ell)!$ (see Theorem~\ref{thm:upper}).
Subsequently, we provide a matching lower bound.
That is, we construct a pair of codewords $\vX$ and $\vY$ with Levenshtein distance at least $\ell$ and show that the number of channels required to disambiguate $\vX$ or $\vY$ is at least $\binom{2\ell}{\ell}n^{t-\ell}/(t-\ell)! - O(n^{t-\ell-1})$. 
This therefore implies that our estimate is asymptotically exact (see Theorem~\ref{thm-main}).
Furthermore, in the special case where $\ell=t$, we determine that $N(n,\ell,t)=\binom{2\ell}{\ell}$.

Before we formally state our contributions, we briefly remark on our proof techniques.
The main difficulty lies with the proof of the upper bound. 
While our arguments bear certain similarities to that of Levenshtein \cite{Levenshtein.2001} and Gabrys and Yaakobi \cite{Gabrys.2018}, an analysis that mimics these works is too tedious.
Instead, we turn to techniques in {\em subsequence combinatorics}.
Of particular interest is~\cite{Elzinga.2008}, where Elzinga \etal{} developed certain recursion rules and used dynamic programming to provide quadratic-time algorithms to enumerate certain subsequence problems. In our paper, we modify these recursion rules to provide an inductive proof of the upper bound in Section~\ref{sec:upper}.
We believe that the recursion rules developed in this work will provide insights for other problems related to sequence reconstruction.


\section{Preliminaries}

Let $\Sigma$ denote the binary alphabet $\{0,1\}$. 
We use $\Sigma^n$ to denote the set of all length-$n$ binary sequences. 

Let $\vx\in \Sigma^n$. Then the {\em deletion ball of radius $t$ centered at $\vx$} is defined to be the set of all length-$(n-t)$ subsequences of $\vx$ and this ball is denoted by $\cD_t(\vx)$.
Given two binary sequences $\vx$ and $\vy$ with $|\vy|=|\vx|+k$, we are interested in the intersection of their deletion balls and 
we use $\cD(\vx,\vy,t,t+k)$ to denote the set $\cD_t(\vx)\cap\cD_{t+k}(\vy)$.
Furthermore, we define their {\em Levenshtein distance} to be $d_L(\vx,\vy)\triangleq\min \left\{ t\ge 0: \cD\left(\vx, \vy, t, t+k\right) \neq \varnothing \right\}$.
Equivalently, if $d_L(\vx,\vy)=\ell$, we have that 
\[
	\cD(\vx, \vy, \ell, \ell+k) \neq \varnothing \text{ and } \cD(\vx, \vy, \ell-1, \ell+k-1) = \varnothing\,. 
\]
Hence, a codebook $\cC$ is an {\em $(\ell-1)$-deletion-correcting code} if $d_L(\vx,\vy)\ge \ell$ for all distinct $\vx,\vy\in \cC$. 

\vspace{2mm}

We now formally define our problem statement. 
For $1\le \ell \le t< n$, the task of {\em sequence reconstruction problem} for deletion channels is to determine the following quantity.
\vspace{-3mm}

{\small 
\begin{equation}\label{eq:Ntell}
	N(n, \ell, t ) \triangleq \max \{ |\cD(\vx,\vy,t,t+k) | :\vx,\vy\in \Sigma^n, d_L(\vx,\vy)\ge \ell \}.
\end{equation}
}%
Suppose we have an $(\ell-1)$-deletion correcting code $\cC$ of length $n$.
If a codeword from $\cC$ is transmitted over a $t$-deletion channel, 
Levenshtein showed that $N(n,\ell,t)+1$ distinct channel outputs are sufficient to allow unique reconstruction of the transmitted word \cite{Levenshtein.2001}.

For $\ell\in\{1,2\}$, the exact values $N(n,\ell,t)$ have been determined in \cite{Levenshtein.2001, Gabrys.2018}.
To state these results, we require the maximum size of a $t$-deletion ball.
Specifically, for $0\le  t < n$, we use $D(n,t)$ to denote the quantity $\max \{|\cD_t(\vx)| : \vx\in \Sigma^n \}$. We know from \cite{Calabi.1967} that 
\begin{equation}\label{eq:l0}
	D(n,t) = \sum_{i=0}^t \binom{n-t}{i} = \frac{1}{t!} n^t - O(n^{t-1})\, ,
\end{equation}
\noindent and the maximum is achieved when the $\vx$ is alternating.
For convenience, we extend the domain of \eqref{eq:Ntell} to include $\ell=0$ and hence, $N(n,0,t)$ is given by $D(n,t)$.

When $\ell=1$, we have the following landmark result of Levenshtein.

\begin{theorem}[Levenshtein \cite{Levenshtein.2001}]\label{thm:l1}
	For $1\le t< n $,
	\begin{equation}\label{eq:l1}
		N(n,1,t) = 2D(n-2,t-1) = \frac{2}{(t-1)!}n^{t-1} - O(n^{t-2}).
	\end{equation}
\end{theorem}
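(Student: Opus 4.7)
The plan is to establish matching lower and upper bounds on $N(n,1,t)$; the asymptotic estimate then follows directly from \eqref{eq:l0}.

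\textbf{Lower bound.} I would exhibit the pair $\vx = 0101\cdots$ and $\vy = 1010\cdots$ (the two length-$n$ alternating sequences of opposite parities) and argue that $|\cD_t(\vx) \cap \cD_t(\vy)| = 2D(n-2,t-1)$ by direct enumeration. Since $\vx \ne \vy$, $d_L(\vx,\vy) \ge 1$. The count proceeds by partitioning the intersection according to the leading bit of $\vu \in \cD_t(\vx) \cap \cD_t(\vy)$: the alternating structure of both $\vx$ and $\vy$ forces each class to correspond bijectively to the length-$(n-t-1)$ subsequences of an alternating word of length $n-2$. Invoking \eqref{eq:l0} on this inner block and summing over the two parities gives $2D(n-2,t-1)$.

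\textbf{Upper bound, preliminary dichotomy.} Fix arbitrary $\vx \ne \vy$ in $\Sigma^n$, and let $i$ be the smallest index at which they differ; without loss of generality $\vx_i = 0$ and $\vy_i = 1$. For each $\vu \in \cD_t(\vx) \cap \cD_t(\vy)$, I examine its \emph{leftmost} embedding in $\vx$, that is, the greedy matching that places $\vu_k$ at the smallest available position carrying the correct bit. Partition the intersection into $A$ (leftmost embedding in $\vx$ avoids position $i$) and $B$ (leftmost embedding in $\vx$ uses position $i$). For $\vu \in A$, $\vu$ is a subsequence of $\vx^{(i)}$, the length-$(n-1)$ word obtained by deleting bit $i$. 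For $\vu \in B$, I claim the leftmost embedding of $\vu$ into $\vy$ must also avoid position $i$: since $\vx$ and $\vy$ agree on positions $<i$, the first $k-1$ greedy matches in $\vy$ coincide with those in $\vx$, all lying in $\{1,\ldots,i-1\}$; the greedy step in $\vx$ chose position $i$ because no $0$ appears earlier after the $(k-1)$-th match, and the same is true in $\vy$, yet $\vy_i = 1 \ne 0 = \vu_k$, so the greedy embedding in $\vy$ must skip past $i$. Hence $\vu$ is a subsequence of $\vy^{(i)}$, giving the preliminary bound $|\cD_t(\vx) \cap \cD_t(\vy)| \le 2D(n-1,t-1)$.

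\textbf{Main obstacle: sharpening to $n-2$.} The principal technical difficulty is tightening $2D(n-1,t-1)$ to $2D(n-2,t-1)$. A naive double application of the dichotomy at the largest disagreement index $j$ does not work, since being a subsequence of both $\vx^{(i)}$ and $\vx^{(j)}$ does not imply being a subsequence of the length-$(n-2)$ word $\vx^{(i,j)}$, as small examples already demonstrate. To close the gap, I would exploit additional structure: the leftmost-embedding condition fixes a specific portion of $\vu$ near position $i$ (namely the prefix that must realize the greedy matches in $\{1,\ldots,i-1\}$), so the space of admissible $\vu$ is strictly smaller than the whole of $\cD_{t-1}(\vx^{(i)})$; a refined count should show that $|A| \le D(n-2,t-1)$ and symmetrically $|B| \le D(n-2,t-1)$. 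The degenerate case $i=j$ (Hamming distance $1$) is simpler: $\vx^{(i)} = \vy^{(i)}$, so $A$ and $B$ together comprise exactly $\cD_{t-1}(\vx^{(i)})$ and $|\cD_t(\vx)\cap\cD_t(\vy)| \le D(n-1,t-1) \le 2D(n-2,t-1)$ by a Pascal-rule expansion. Orchestrating the refined count so that the extra factor of $(n-1)/(n-2)$ in the leading term is recovered for every non-degenerate pair is the technical crux of the argument.
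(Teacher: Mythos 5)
This theorem is not proved in the paper at all --- it is imported verbatim from Levenshtein --- so the only question is whether your argument actually establishes the stated equality. Your lower bound does: the two alternating sequences of opposite phase give exactly $2D(n-2,t-1)$ common subsequences (conditioning on the leading bit of $\vu$ reduces each class to $\cD_{t-1}$ of an alternating word of length $n-2$), and your greedy/leftmost-embedding dichotomy at the first disagreement index $i$ is sound and yields $|A|+|B|\le 2D(n-1,t-1)$. That already suffices for the asymptotic form $\frac{2}{(t-1)!}n^{t-1}-O(n^{t-2})$, and indeed it is what the present paper's own machinery (Theorem~\ref{thm:upper} with $\ell=1$, $k=0$) would also give.

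The genuine gap is the exact upper bound $N(n,1,t)\le 2D(n-2,t-1)$, which is the actual content of the theorem and which you explicitly leave unresolved (``the technical crux''). Your proposed refinement $|A|\le D(n-2,t-1)$ and $|B|\le D(n-2,t-1)$ is the right target --- it is tight for the alternating pair --- but no mechanism is supplied for it: knowing that the greedy match stalls before position $i$ constrains the prefix of $\vu$ realized in positions $1,\dots,i-1$, and you would need to convert that constraint into the loss of one further ``degree of freedom'' (e.g.\ by showing $A$ injects into $\cD_{t-1}$ of a word of length $n-2$, uniformly over all $\vx\ne\vy$, not just when the disagreement sits next to a run boundary). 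As you yourself note, a second application of the dichotomy at the last disagreement index does not compose with the first. So as written this is a correct proof of the asymptotic statement but only a proof sketch, with an acknowledged missing step, for the exact identity $N(n,1,t)=2D(n-2,t-1)$. (A minor point: your use of $\vx^{(i)}$ for ``$\vx$ with bit $i$ deleted'' clashes with the paper's convention that $\vx^{(i)}$ is the length-$i$ prefix; worth renaming if this text is to coexist with Section~\ref{sec:upper}.)
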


The result for $\ell=2$ is obtained more than a decade later by Gabrys and Yaakobi.

\begin{theorem}[Gabrys and Yaakobi \cite{Gabrys.2018}]\label{thm:l2}
	For $2\le t< n$ and $n\ge 8$,
{\small
\begin{align}
	N(n,2,t) & = 2D(n-4,t-2) + 2D(n-5,t-2) + 2D(n-7,t-2) \notag \\
	& \hspace*{2.5cm}+ D(n-6,t-3) + D(n-7,t-3) \notag \\
	& = \frac{6}{(t-2)!}n^{t-2} - O(n^{t-3}). \label{eq:l2}
\end{align}
}
\end{theorem}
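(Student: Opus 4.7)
The plan is to establish matching upper and lower bounds for $N(n,2,t)$. For the upper bound I would fix an arbitrary pair $\vx,\vy\in\Sigma^n$ with $d_L(\vx,\vy)\ge 2$ and bound $|\cD_t(\vx)\cap\cD_t(\vy)|$, focusing on the worst case $d_L(\vx,\vy)=2$ (larger Levenshtein distance only shrinks the intersection). For the lower bound I would exhibit an explicit pair at Levenshtein distance $2$ attaining the stated quantity.

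The key observation for the upper bound is the structural description of pairs at Levenshtein distance $2$: $\vx$ and $\vy$ share a common subsequence $\vu$ of length $n-2$, and any embedding of $\vu$ into $\vx$ and into $\vy$ leaves two positions in each sequence unmatched. Depending on where these unmatched positions sit relative to the maximal runs of $\vu$ --- both inside the same maximal run, straddling a single run boundary, or in distinct runs possibly separated by one or more intermediate runs --- one obtains a small number of ``insertion patterns''. For each pattern I would isolate a window of constant length (at most $7$ symbols) around the two unmatched positions, argue that every common subsequence $\vz$ of length $n-t$ must incur a prescribed number (either $2$ or $3$) of deletions inside the window, and then count the ways to distribute the remaining deletions outside the window. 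Outside the window the count reduces to the size of the deletion ball of a shorter alternating string, producing contributions of the form $D(n-a,t-2)$ or $D(n-b,t-3)$. Summing over the contributions of the worst pattern yields the exact expression $2D(n-4,t-2)+2D(n-5,t-2)+2D(n-7,t-2)+D(n-6,t-3)+D(n-7,t-3)$, and the asymptotic evaluation $6/(t-2)!\,n^{t-2}-O(n^{t-3})$ follows from \eqref{eq:l0}.

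For the lower bound I would exhibit $\vx$ and $\vy$ for which the bound is tight. A natural candidate takes $\vu$ alternating (to maximise the deletion-ball contributions outside the window) and produces $\vx$ and $\vy$ by two carefully chosen insertions into $\vu$ placed to maximise the common deletion ball; direct enumeration of common length-$(n-t)$ subsequences, broken up according to how they align with the constant-length window, then matches each term of the formula.

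The hardest step is the structural case analysis for the upper bound: the constants $4,5,6,7$ in the formula already signal that several distinct insertion patterns contribute simultaneously, and each must be enumerated carefully so that no configuration is missed or double-counted. This is precisely the tedium that motivates the subsequence-combinatorics and recursion-rule machinery the authors develop in Section~\ref{sec:upper} for general $\ell$; for $\ell=2$, however, the enumeration remains finite and manageable, and is carried out in~\cite{Gabrys.2018}.
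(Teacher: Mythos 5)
This statement is not proved in the paper at all: it is quoted verbatim from Gabrys and Yaakobi \cite{Gabrys.2018}, and the present authors use it only as a benchmark that their general Theorem~\ref{thm-main} must recover asymptotically. So there is no in-paper proof to compare against; the only question is whether your sketch would itself constitute a proof, and as written it does not. The entire content of the theorem lives in the step you label ``the hardest step'' and then defer back to \cite{Gabrys.2018}: the exhaustive classification of insertion patterns for a distance-$2$ pair, the verification that each pattern contributes exactly the claimed combination of terms $D(n-a,t-2)$ and $D(n-b,t-3)$, and the identification of the worst pattern. Without that enumeration the specific constants $4,5,6,7$ and the coefficient $6$ are unexplained, so the proposal is an outline of a strategy rather than a proof.

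Two bridging claims in the outline also need justification rather than assertion. First, the reduction to $d_L(\vx,\vy)=2$ is not automatic: the maximum in \eqref{eq:Ntell} ranges over all pairs with $d_L\ge 2$, and pairs at distance $3$ or more have no common subsequence of length $n-2$, so your window analysis does not apply to them; one must separately show their intersections are dominated by the distance-$2$ optimum (asymptotically this follows from $N(n,3,t)=O(n^{t-3})$, but the theorem claims an exact equality). Second, the claim that the count outside the window ``reduces to the size of the deletion ball of a shorter alternating string'' presupposes that the extremal pair is alternating away from the window; that is an extremal-structure statement that itself requires proof (it is where \eqref{eq:l0} and the characterization of maximal deletion balls enter). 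Both points are handled in \cite{Gabrys.2018}, but a self-contained proof would have to supply them.
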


\subsection{Our Contributions}

In this paper, we provide asymptotically exact estimates of $N(n,\ell,t)$ for {\em all} values of $0\le \ell \le t$ . Specifically, we establish the following theorem.

\begin{theorem}[Main Theorem]\label{thm-main}
	For $0\le \ell \le t <  n $, we have that
	\begin{equation}\label{eq:lmain}
		N(n,\ell,t) = \frac{\binom{2\ell}{\ell}}{(t-\ell)!} n^{t-\ell} - O(n^{t-\ell-1})\,.
	\end{equation}
\end{theorem}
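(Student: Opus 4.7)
The plan is to establish matching upper and lower bounds on $N(n,\ell,t)$, both with leading coefficient $\binom{2\ell}{\ell}/(t-\ell)!$. Since $N(n,0,t) = D(n,t)$ is already known and the coefficient $\binom{0}{0}/t! = 1/t!$ matches \eqref{eq:l0}, it is natural to induct on $\ell$ for the upper bound, while the lower bound should follow from an explicit construction of a pair $(\vX, \vY)$ with $d_L(\vX,\vY) = \ell$ that realises the claimed asymptotic.

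For the upper bound, I would follow the Elzinga \etal{} style of subsequence recursion. Writing $\vx = a_1 a_2 \cdots a_n$ and $\vy = b_1 b_2 \cdots b_{n+k}$, condition on the leading (or trailing) symbols of each word: every common subsequence either retains both leading symbols (requiring $a_1 = b_1$), retains one and deletes the other, or deletes both, producing a recursion whose terms are deletion-ball intersections of shorter sequences. When a mismatch $a_1 \ne b_1$ is forced by the case split, the Levenshtein distance of the shortened pair can drop by at most one, so the inductive hypothesis applies at level $\ell - 1$. Tracking leading coefficients through the recursion, the identity $\binom{2\ell}{\ell} = 2\binom{2\ell-1}{\ell-1}$ (together with Pascal) should cause the coefficient to propagate correctly from $\binom{2(\ell-1)}{\ell-1}$ to $\binom{2\ell}{\ell}$. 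The main obstacle, explicitly flagged by the authors, is ensuring that each recursion step is \emph{tight} at the leading order: a naive union bound overcounts the common subsequences, so the argument must be arranged so that the overcounted pieces contribute only to the $O(n^{t-\ell-1})$ slack while leaving the $n^{t-\ell}$ coefficient exact.

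For the lower bound, I would exhibit $\vX, \vY \in \Sigma^n$ with $d_L(\vX,\vY)=\ell$ whose common deletion ball attains the target asymptotic. Extrapolating from the cases $\ell \in \{1,2\}$ in Theorems~\ref{thm:l1} and~\ref{thm:l2}, the natural template is a pair sharing a long alternating factor of length $n - O(\ell)$ and differing only on a short prefix (or prefix/suffix combination) of length $O(\ell)$ chosen so that the Levenshtein distance is exactly $\ell$. Each length-$(n-t)$ subsequence in $\cD(\vX,\vY,t,t+k)$ then factors as a deletion pattern on the short prefix times a deletion pattern on the long alternating tail. The alternating tail contributes $D(n - O(\ell), t - \ell) = n^{t-\ell}/(t-\ell)! - O(n^{t-\ell-1})$ by \eqref{eq:l0}, while the prefix contribution reduces to counting aligned deletion patterns within a window of length $O(\ell)$, which I expect to evaluate to the central binomial coefficient $\binom{2\ell}{\ell}$ via a lattice-path or balanced-parenthesization bijection. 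Combined with the upper bound, this construction should simultaneously settle the special case $\ell = t$: the polynomial factor $n^{t-\ell}$ collapses to $1$, the alternating tail disappears, and the exact equality $N(n,\ell,\ell) = \binom{2\ell}{\ell}$ drops out from counting the prefix patterns alone.
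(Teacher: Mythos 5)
Your overall strategy coincides with the paper's: an inductive upper bound driven by Elzinga-style subsequence recursions on the trailing symbols, and a lower bound from an explicit pair consisting of a short distance-$\ell$ prefix glued to a long alternating tail. The lower-bound half of your sketch is essentially the paper's Proposition~\ref{prop:lower-bound} (the paper takes prefixes $\vA_\ell=(1010)^{\ell-1}10$ and $\vB_\ell=(0110)^{\ell-1}01$ of length $4\ell-2$, certifies $d_L\ge\ell$ by counting runs, and gets $\binom{2\ell}{\ell}$ by choosing which $\ell$ of the $2\ell$ mismatched positions to delete), and your observation that $\ell=t$ then gives the exact value $\binom{2t}{t}$ is correct.

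The upper-bound half, however, has a genuine gap as stated: an induction on $\ell$ alone, over \emph{equal-length} pairs, cannot close. When the trailing symbols disagree, the recursion (the paper's Lemma~\ref{lem:recur-ball}) bounds $\Delta(\vx,\vy,t,t)$ by $\Delta(\vx,\vy^{(n-1)},t,t-1)+\Delta(\vx^{(n-1)},\vy,t-1,t)$, and both terms involve pairs of \emph{unequal} lengths; your hypothesis "at level $\ell-1$" simply does not apply to them, and if one crudely replaces them by equal-length quantities at distance $\ell-1$ the polynomial degree jumps to $t-\ell+1$ rather than staying at $t-\ell$. The paper's fix is to prove the strengthened two-parameter statement $\Delta(\vx,\vy,t,t+k)\le\binom{k+2\ell}{\ell}n^{t-\ell}/(t-\ell)!$ for $|\vy|=|\vx|+k$, and to induct on the triples $(\ell,t,k)$ in lexicographic order (with an inner induction on $n$); the coefficient $\binom{2\ell-1}{\ell-1}$ in your identity $\binom{2\ell}{\ell}=2\binom{2\ell-1}{\ell-1}$ is precisely the $k=1$ instance of this generalized bound, and for $k>0$ one of the two recursive terms in the mismatch case keeps distance $\ge\ell$ (only $k$ drops), which is where Pascal's rule $\binom{k+2\ell-1}{\ell}+\binom{k+2\ell-1}{\ell-1}=\binom{k+2\ell}{\ell}$ enters. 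Without this strengthening your induction has no usable hypothesis to invoke. Separately, your worry that the union bound must be "tight at leading order" is a red herring: Theorem~\ref{thm:upper} is a pure upper bound with no error term, and asymptotic exactness comes entirely from the matching lower-bound construction, not from tightness of the recursion.
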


\begin{remark}\hfill
	\begin{itemize}
		\item Observe that when $\ell\in\{0,1,2\}$, the main theorem recovers the asymptotic estimates of \eqref{eq:l0}, \eqref{eq:l1}, and \eqref{eq:l2}, respectively.
		\item	In \eqref{eq:l0}, \eqref{eq:l1}, \eqref{eq:l2}, \eqref{eq:lmain}, we use the usual big $O$ notation where the asymptotics is measured in terms of $n$. That is, $f(n)=O(g(n))$ means that $\limsup_{n\to\infty} f(n)/g(n)$ is bounded by some constant $C$. 
		Also, we emphasize that whenever the big $O$ notation is used,  both $f(n)$ and $g(n)$ are positive functions. Hence, for example, from \eqref{eq:l0}, we have that $D(n,t)$ is at most $n^t/t!$\,.
	\end{itemize}
\end{remark}

Here, we outline the proof for the main theorem.
First, we demonstrate an upper bound for $N(n,\ell,t)$ in Section~\ref{sec:upper}. 
In particular, we study a general version of the quantity $N(n,\ell,t)$ where the transmitted sequences are of different lengths. 
Specifically, we consider two binary sequences $\vx,\vy$ with $|\vy|=|\vx|+k$ and $k\ge 0$, set $\Delta(\vx,\vy , t, t+k)$ to be the size of the intersection $\cD(\vx,\vy , t, t+k)$. 
Section~\ref{sec:upper} is dedicated to an induction proof of the following theorem.
\begin{theorem}[Upper bound] \label{thm:upper}
	Let $0\le \ell \le t \le n $. 
	Suppose that $\vx \in \Sigma^n$ and $\vy \in \Sigma^{n + k}$ with $k\ge 0$.
	If  $d_L(\vx, \vy) \geq\ell$, then we~have 
	\begin{equation}\label{eq:main}
		\Delta(\vx,\vy , t, t+k) \leq \frac{\binom{k+2\ell}{\ell}}{(t-\ell)!}n^{t-\ell}\,.
	\end{equation}
\end{theorem}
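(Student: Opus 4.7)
Proof proposal. My plan is to prove the bound by strong induction on $|\vx| + |\vy|$, after first establishing two recursive decompositions of $\cD(\vx, \vy, t, t+k)$ based on the last symbols of $\vx$ and $\vy$. Writing $\vx = \vx' a$ and $\vy = \vy' b$, when $a = b$ I would show the equality
\[
\cD(\vx' a, \vy' a, t, t+k) \;=\; \cD(\vx', \vy', t-1, t-1+k) \,\cup\, \{\vz a : \vz \in \cD(\vx', \vy', t, t+k)\}.
\]
The $\supseteq$ direction is immediate, and for $\subseteq$ the key observation is that whenever $\vw = \vz a$ arises by embedding $\vz a$ entirely inside $\vx'$ (i.e., $\vz a \in \cD_{t-1}(\vx')$), dropping the matched position of the final $a$ yields $\vz \in \cD_t(\vx')$, so no subcase escapes the two pieces on the right. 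Since $d_L(\vx', \vy') = d_L(\vx, \vy) \geq \ell$, the induction hypothesis applies to both pieces. When $a \neq b$, I split the common ball by the last symbol of $\vw$: a $\vw$ ending in $a$ cannot consume the trailing $b$ of $\vy$, hence lies in $\cD(\vx, \vy', t, t+k-1)$, and symmetrically when $\vw$ ends in $b$. For $k \geq 1$ this gives
\[
\Delta(\vx, \vy, t, t+k) \;\leq\; \Delta(\vx, \vy', t, t+k-1) + \Delta(\vx', \vy, t-1, t+k),
\]
and for $k = 0$ an analogous rule where both sub-reductions have parameter $k' = 1$. The Levenshtein thresholds satisfy $d_L(\vx, \vy') \geq \ell$ and $d_L(\vx', \vy) \geq \ell - 1$, since dropping one symbol can decrease the longest-common-subsequence length by at most one.

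With the recursions in hand, I run the induction on $|\vx| + |\vy|$, observing that every sub-problem strictly decreases this total. The base cases are $\ell = 0$, where the bound reduces to $\Delta \leq |\cD_t(\vx)| \leq n^t/t!$ via \eqref{eq:l0}, and $n = \ell$ with $k = 0$, which forces $t = \ell$ and gives $\Delta \leq 1 \leq \binom{2\ell}{\ell}$. In the inductive step, the matching-symbol recursion together with the IH reduces the bound to the Bernoulli-type inequality $(n-1)^{t-\ell-1}(n + t - \ell - 1) \leq n^{t-\ell}$. The mismatching-symbol recursion ($k \geq 1$) produces two terms with binomial coefficients $\binom{k-1+2\ell}{\ell}$ and $\binom{k+2\ell-1}{\ell-1}$, which combine via Pascal's identity to the desired $\binom{k+2\ell}{\ell}$ after the trivial estimate $(n-1)^{t-\ell} \leq n^{t-\ell}$. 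The $k = 0$ mismatching case closes through the doubling identity $\binom{2\ell}{\ell} = 2\binom{2\ell-1}{\ell-1}$, and the edge case $t = \ell$ is handled by noting that the $\Delta(\vx', \vy', t-1, t-1+k)$ term vanishes because $t-1 < d_L(\vx', \vy')$.

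The main obstacle I foresee is the $a \neq b$ decomposition: the two sub-problems inherit different Levenshtein thresholds --- one stays at $\ell$ while the other drops to $\ell-1$ --- and this asymmetry is exactly what Pascal's identity compensates for. Setting up the Levenshtein bookkeeping so that $\binom{k+2\ell}{\ell}$ emerges from $\binom{k+2\ell-1}{\ell} + \binom{k+2\ell-1}{\ell-1}$, and separately treating the boundary configurations $k = 0$ (where the would-be shorter sub-problem is reflected across the diagonal $|\vx| = |\vy|$) and $t = \ell$ (where the first term of the matching-symbol recursion vanishes), are the delicate points; once those are correct, the induction closes by routine algebra.
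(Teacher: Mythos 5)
Your proposal is correct and follows essentially the same route as the paper: the same last-symbol decompositions of $\cD(\vx,\vy,t,t+k)$ and the same prefix Levenshtein-distance bookkeeping (Lemmas~\ref{lem:recur-dist} and~\ref{lem:recur-ball}), combined with the same Pascal/doubling identities and Bernoulli-type estimate in the inductive step; the only cosmetic difference is that you induct on $|\vx|+|\vy|$ where the paper uses the lexicographic order on $(\ell,t,k)$ together with an inner induction on $n$. The one point to tidy is that the base case should cover all of $n=t$ (not only $n=\ell$ with $k=0$), since otherwise the matching-symbol recursion calls a sub-problem with $t>|\vx'|$; there $\Delta\le 1$ holds directly, exactly as in the paper's base case for its inner induction.
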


Next, in Section~\ref{sec:lower}, we provide a matching lower bound. Specifically, we demonstrate the following proposition.

\begin{proposition}[Lower Bound] \label{prop:lower-bound}
	Fix $\ell>0$.
	For $n\ge 4\ell-2$, there exists two sequences $\vX, \vY \in \Sigma^n$ such that $d_L(\vX, \vY) \geq \ell$ and
	\begin{align} 
		\Delta(\vX, \vY, t, t) & \geq  \binom{2\ell}{\ell} D(n-4\ell+2, t-\ell) \label{eq:lower_bound}\\
		& = \frac{\binom{2\ell}{\ell}}{(t-\ell)!}n^{t-\ell} - O(n^{t-\ell-1}), \text{ for all fixed $t\ge \ell$.}\notag
	\end{align}
\end{proposition}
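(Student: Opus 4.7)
The plan is to exhibit an explicit pair of sequences $\vX, \vY \in \Sigma^n$ of the product form $\vX = \vu \vw$ and $\vY = \vu' \vw$, where $\vu, \vu' \in \Sigma^{4\ell-2}$ are carefully chosen ``cores'' and $\vw \in \Sigma^{n-4\ell+2}$ is an alternating ``tail''. The cores will carry the combinatorial factor $\binom{2\ell}{\ell}$ and force $d_L(\vX,\vY) \ge \ell$, while the alternating tail contributes the polynomial factor $D(n - 4\ell + 2, t - \ell)$ via Calabi's formula~\eqref{eq:l0}. For $\ell = 1$, this reproduces Levenshtein's construction $\vu = 01$, $\vu' = 10$, giving the factor $\binom{2}{1} = 2$ in~\eqref{eq:l1}.

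The crux is the core construction: choose $\vu, \vu' \in \Sigma^{4\ell-2}$ with $d_L(\vu, \vu') \ge \ell$ and exhibit $\binom{2\ell}{\ell}$ distinct length-$(3\ell-2)$ strings that are subsequences of both. The natural parametrization is via the $\binom{2\ell}{\ell}$ lattice paths from $(0,0)$ to $(\ell,\ell)$: each such path should specify compatible choices of $\ell$ deletions in $\vu$ and in $\vu'$ yielding the same length-$(3\ell-2)$ string. For small $\ell$ this can be verified by direct enumeration; for general $\ell$ I anticipate an induction on $\ell$ exploiting the Pascal recursion $\binom{2\ell}{\ell} = \binom{2\ell-1}{\ell-1} + \binom{2\ell-1}{\ell}$. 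The first symbol of $\vw$ will be fixed to ensure that, in every common subsequence produced below, the split between the core portion and the tail portion is uniquely identifiable from the string itself.

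Once the cores are in hand, the remaining steps are straightforward. First, $d_L(\vX, \vY) \ge \ell$ follows by decomposing any common subsequence of $\vX, \vY$ along the shared tail: the resulting core portion is a common subsequence of $\vu$ and $\vu'$, hence of length at most $(4\ell-2) - \ell$ by construction, forcing at least $\ell$ deletions in each of $\vX, \vY$. Second, for each of the $\binom{2\ell}{\ell}$ common core subsequences $\vs^{(i)}$ and each $\vq \in \cD_{t-\ell}(\vw)$, the concatenation $\vs^{(i)} \vq$ is a length-$(n-t)$ subsequence of both $\vX$ and $\vY$, and the boundary coordination renders these concatenations pairwise distinct. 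Summing yields $\Delta(\vX, \vY, t, t) \ge \binom{2\ell}{\ell} D(n - 4\ell + 2, t - \ell)$, and Calabi's formula~\eqref{eq:l0} converts this into the claimed asymptotic $\binom{2\ell}{\ell} n^{t-\ell}/(t-\ell)! - O(n^{t-\ell-1})$.

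The main obstacle is the combinatorial identity $|\cD_\ell(\vu) \cap \cD_\ell(\vu')| \ge \binom{2\ell}{\ell}$ for the chosen cores. Naive symmetric candidates such as $(01)^\ell (10)^{\ell-1}$ and $(10)^\ell (01)^{\ell-1}$ already undershoot the target count at $\ell = 2$, so pinpointing cores that match the central binomial coefficient exactly --- and verifying the lattice-path parametrization for them --- constitutes the technical heart of the argument.
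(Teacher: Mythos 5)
There is a genuine gap: what you have written is a correct \emph{plan} whose skeleton coincides with the paper's (a length-$(4\ell-2)$ core pair concatenated with an alternating tail, so that the count factors as $\binom{2\ell}{\ell}\cdot D(n-4\ell+2,t-\ell)$), but the entire substance of the proposition --- exhibiting concrete cores $\vu,\vu'$ with $d_L(\vu,\vu')\ge\ell$ \emph{and} at least $\binom{2\ell}{\ell}$ common subsequences after $\ell$ deletions --- is left open. You say so yourself: you ``anticipate an induction on $\ell$'' via the Pascal recursion and concede that pinpointing the right cores ``constitutes the technical heart of the argument.'' Since you also report that your natural candidates undershoot already at $\ell=2$, the proposal as it stands does not establish the lower bound for any $\ell\ge 2$; the claimed lattice-path parametrization is a hoped-for structure, not a verified one.

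For comparison, the paper resolves exactly this point with the explicit pair $\vA_\ell=(10)^{2\ell-1}$ (the alternating core) and $\vB_\ell=(0110)^{\ell-1}01$, i.e.\ $\ell$ blocks $01$ interleaved with $\ell-1$ separators $10$. The $\binom{2\ell}{\ell}$ common subsequences are obtained by deleting any $\ell$ of the $2\ell$ positions lying in the $01$-blocks of $\vB_\ell$: distinctness follows by looking at the first block where two deletion sets differ, and membership in $\cD_\ell(\vA_\ell)$ follows from an explicit re-insertion rule (insert $2$, $1$, or $0$ bits into a block according to whether it lost $0$, $1$, or $2$ bits, which regenerates exactly $\ell$ new $10$-separators). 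The distance bound $d_L(\vA_\ell,\vB_\ell)\ge\ell$ is not proved by subsequence counting at all but by counting runs: $R(\vA_\ell)=4\ell-2$ versus $R(\vB_\ell)=2\ell$, and each interior deletion lowers the run count by at most $2$, so $\ell-1$ deletions cannot close the gap. If you adopt these cores, your outer argument (tail decomposition for the distance of $\vX,\vY$, concatenation with $\cD_{t-\ell}(\vw)$, and Calabi's formula) goes through essentially as you describe; without them, the proof is incomplete at its central step.
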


Theorem~\ref{thm-main} now follows from Theorem~\ref{thm:upper} and Proposition~\ref{prop:lower-bound}.

Observe that when we set $\ell=t$ and $k=0$
in \eqref{eq:main}, we have that $N(n,t,t)\le \binom{2t}{t}$. 
On the other hand, for $n\ge 4t-2$, it follows from \eqref{eq:lower_bound} that $N(n,t,t)\ge \binom{2t}{t}$. Therefore, we have determined the exact value of $N(n,t,t)$.

\begin{corollary}\label{cor:lt}
	Set $\ell = t$. For $n\ge 4t-2$, we have that \[N(n,t,t)=\binom{2t}{t} = \Theta(1)\text{ for fixed values of $t$}.\]  
\end{corollary}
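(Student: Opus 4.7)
The plan is to deduce the corollary as an immediate specialization of the two bracketing results already available: the upper bound from Theorem~\ref{thm:upper} and the lower bound from Proposition~\ref{prop:lower-bound}, invoked in the extreme regime $\ell = t$. Since $N(n,t,t)$ is defined in \eqref{eq:Ntell} via two sequences of the \emph{same} length $n$, both invocations are taken with $k=0$. The happy accident here is that the upper and lower bounds, which are only asymptotically tight in general, actually collapse to the same integer value when $\ell=t$, because the polynomial factor $n^{t-\ell}$ degenerates to a constant.

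First I would substitute $k=0$ and $\ell=t$ into \eqref{eq:main}. The factorial $(t-\ell)! = 0!$ is $1$, the power $n^{t-\ell} = n^{0}$ is $1$, and the binomial coefficient $\binom{k+2\ell}{\ell}$ becomes $\binom{2t}{t}$. Theorem~\ref{thm:upper} then asserts that for \emph{every} pair $\vx,\vy\in\Sigma^{n}$ with $d_L(\vx,\vy)\ge t$ one has $\Delta(\vx,\vy,t,t)\le \binom{2t}{t}$, so taking the maximum over such pairs yields $N(n,t,t)\le\binom{2t}{t}$.

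For the matching lower bound I would plug $\ell=t$ into \eqref{eq:lower_bound}. The quantity on the right becomes $\binom{2t}{t}\,D(n-4t+2,\,0)$, and by definition a zero-deletion ball contains only the word itself, so $D(m,0)=1$ for every $m\ge 0$. The hypothesis $n\ge 4t-2$ of the corollary is exactly the condition $n\ge 4\ell-2$ needed to apply Proposition~\ref{prop:lower-bound}, so there exist sequences $\vX,\vY\in\Sigma^n$ with $d_L(\vX,\vY)\ge t$ and $\Delta(\vX,\vY,t,t)\ge \binom{2t}{t}$, giving $N(n,t,t)\ge\binom{2t}{t}$.

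Combining the two inequalities gives the exact value $N(n,t,t)=\binom{2t}{t}$, and since $\binom{2t}{t}$ depends only on $t$, the $\Theta(1)$ assertion (for fixed $t$) is immediate. I do not anticipate any genuine obstacle: all of the real work is already packaged inside Theorem~\ref{thm:upper} (the induction in Section~\ref{sec:upper}) and Proposition~\ref{prop:lower-bound} (the explicit construction of $\vX,\vY$ in Section~\ref{sec:lower}); the corollary simply records what happens when the polynomial slack in those bounds vanishes.
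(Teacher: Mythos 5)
Your proposal is correct and follows exactly the paper's own route: specialize the upper bound \eqref{eq:main} with $k=0$, $\ell=t$ to get $N(n,t,t)\le\binom{2t}{t}$, and specialize \eqref{eq:lower_bound} with $\ell=t$ (noting $D(n-4t+2,0)=1$) to get the matching lower bound under $n\ge 4t-2$. No gaps; this is the same argument the paper gives immediately before stating Corollary~\ref{cor:lt}.
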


Next, we make a conjecture on the {\em exact} value of $N(n,\ell,t)$. In the cases where $\ell\in \{1,2\}$ and $\ell=t$, we remark that the conjecture recovers Theorems~\ref{thm:l1},~\ref{thm:l2} and Corollary~\ref{cor:lt}.

\vspace{2mm}

\noindent{\bf Conjecture}. For $1\le \ell\le t < n$ and sufficiently large $n$, we have  that
\begin{equation}\label{eq:conjecture}
N(n,\ell,t) = N(n-1,\ell,t) + N(n-2,\ell,t-1).
\end{equation} 

Finally, in the spirit of Levenshtein's work \cite{Levenshtein.2001,Levenshtein.2001.jcta}, we provide a polynomial-time reconstruction method in Section~\ref{sec:efficient} for the special case where $\ell=t$.

\begin{proposition}\label{prop:recon}
	Let $\cC$ be an $(t-1)$-deletion-correcting code of length $n$ for some $2 \le t < n$. 
	Suppose further that $\cC$
	has an $(t-1)$-deletion-correcting decoder that runs in $T(n)$ time.
	If we transmit $\vx\in\cC$ over a $t$-deletion channel and obtain $M\triangleq N(n,t,t)+1 = \binom{2t}{t}+1$ distinct outputs, then
	we can determine $\vx$ in time $O(T(n)+Mn)$ time.
\end{proposition}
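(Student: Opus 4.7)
The plan is to use the given $(t-1)$-deletion-correcting decoder as a subroutine, after first producing from the outputs a length-$(n-t+1)$ word $\vz$ that lies in $\cD_{t-1}(\vx)$; feeding such a $\vz$ to the decoder then returns $\vx$ in time $T(n)$.

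Fix any one output $\vy$ of length $n-t$. Since $\vy\in\cD_t(\vx)$, at least one length-$(n-t+1)$ supersequence of $\vy$, call it $\vz^\ast$, lies in $\cD_{t-1}(\vx)$ --- simply reinsert any one of the $t$ bits that $\vx$ lost to produce $\vy$. A standard counting argument shows that $\vy$ has exactly $n-t+2$ distinct length-$(n-t+1)$ supersequences over the binary alphabet, so $\vz^\ast$ sits in an explicit shortlist. A brute-force version of the algorithm invokes the decoder on each candidate in this shortlist, collects the returned candidate codewords, and cross-checks each against the $M$ outputs by subsequence matching. Correctness is immediate: the decoder returns $\vx$ when called on $\vz^\ast$, and by Corollary~\ref{cor:lt} no other codeword $\vc'\in\cC$ can have all $M$ outputs in its $t$-deletion ball --- otherwise $|\cD_t(\vx)\cap\cD_t(\vc')|\ge M>N(n,t,t)$, contradicting $d_L(\vx,\vc')\ge t$. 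So exactly one candidate survives the cross-check, namely $\vx$, and this already yields polynomial-time reconstruction.

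The main obstacle is sharpening the running time to the stated $O(T(n)+Mn)$, which budgets only a single decoder call plus $O(Mn)$ auxiliary work. The plan is to pinpoint $\vz^\ast$ inside the shortlist of $n-t+2$ supersequences using the remaining $M-1$ outputs \emph{directly}, in $O(Mn)$ time, before ever invoking the decoder. A natural mechanism is a local consistency test built from standard LCS-style dynamic programming between $\vy$ and each $\vy_i$, and the key property to establish is that the sequence reconstruction bound of Corollary~\ref{cor:lt} --- which forbids any competing codeword from being simultaneously consistent with all $M$ outputs --- can be pushed down to rule out every wrong supersequence in the shortlist. Once $\vz^\ast$ is identified, a single decoder call produces $\vx$, and an $O(Mn)$ final pass through the outputs verifies consistency as a sanity check, matching the claimed complexity.
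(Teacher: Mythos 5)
Your brute-force version is sound but does not meet the stated time bound: the shortlist of one-insertion supersequences of $\vy$ has size $n-t+2$, so invoking the decoder on every candidate costs $\Theta(n)\cdot T(n)$, and cross-checking up to $n-t+2$ returned codewords against the $M$ outputs costs up to $O(Mn^2)$. This yields a polynomial-time algorithm, which is strictly weaker than the claim. The sharpening to $O(T(n)+Mn)$ is where the real content of the proposition lies, and your proposal for it is only a plan: you neither exhibit the ``local consistency test'' nor prove that it isolates $\vz^\ast$. Moreover, the justification you gesture at --- ``pushing down'' Corollary~\ref{cor:lt} to rule out wrong supersequences --- is not the right tool: that corollary constrains pairs of \emph{codewords} at Levenshtein distance at least $t$, whereas your shortlist consists of arbitrary words of length $n-t+1$; a wrong candidate $\vz$ may lie in $\cD_{t-1}(\vc')$ for another codeword $\vc'$, and nothing in Corollary~\ref{cor:lt} prevents several such candidates from surviving a test based only on pairwise comparisons among the outputs.

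The missing idea is the paper's Lemma~\ref{lem:rec-from-2}: take \emph{two} outputs and write them as $\vy_1=\va_1 0\vu$ and $\vy_2=\va_2 1\vu$, where $\vu$ is their longest common suffix; then at least one of the two specific words $\va_1 01\vu$ and $\va_2 10\vu$ lies in $\cD_{t-1}(\vx)$. This collapses the candidate list from $n-t+2$ words to exactly $2$, so only two decoder calls are needed, producing at most two candidate codewords $\vx_1,\vx_2$; the remaining $M-2$ outputs then eliminate the wrong one in $O(Mn)$ time via linear-time subsequence checks, and it is at this codeword-versus-codeword stage that the bound $N(n,t,t)=\binom{2t}{t}$ is legitimately invoked. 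To salvage your single-output approach you would need an analogous structural lemma pinning down $O(1)$ insertion positions in $\vy$ from the other outputs within $O(Mn)$ total time, and you have not supplied one.
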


So, in particular, when $t=2$, the classic Varshamov-Tenengolts (VT) codes are single-deletion-correcting codes equipped with a linear-time decoder \cite{Levenshtein.1966}. Furthermore, \eqref{eq:l2} states that $N(n,2,2)+1=7$  and so, we can uniquely reconstruct a codeword from any distinct seven reads. Then Proposition~\ref{prop:recon} implies that this reconstruction can be done in linear time.

\section{Upper Bound}\label{sec:upper}

In this section, we prove Theorem~\ref{thm:upper} using {\em induction}. 
For convenience, we rewrite the upper bound in the following form that is amenable to an inductive analysis.
\begin{theorem}\label{thm:upper-ind}
Let $\sS(\ell,t,k)$ denotes the statement:
\begin{quote}
	For all $n\ge t$, we have that
	\begin{equation*}
	\Delta(\vx,\vy , t, t+k)
	\begin{cases}
	\leq \frac{\binom{k+2\ell}{\ell}}{(t-\ell)!}n^{t-\ell}, & \mbox{if $t\ge \ell$},\\
	= 0, & \mbox{if $t < \ell$}.
	\end{cases} 
	\end{equation*}
\end{quote}
Then $\sS(\ell,t,k)$ is true for all $\ell,t,k \ge 0$.
\end{theorem}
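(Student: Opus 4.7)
The plan is to establish $\sS(\ell,t,k)$ by induction over an appropriate well-founded ordering on the triple $(n, t+k, \ell)$, so that the recursion derived below strictly decreases this ordering in every term. The base cases fall out cleanly: when $t < \ell$, the hypothesis $d_L(\vx,\vy) \ge \ell > t$ immediately forces $\cD(\vx, \vy, t, t+k) = \varnothing$ by the very definition of Levenshtein distance, giving $\Delta = 0$ as claimed; and when $\ell = 0$, the statement reduces to bounding $|\cD_t(\vx)|$, which is given by~\eqref{eq:l0}. A separate trivial base is $n = t$, where the only length-$0$ subsequence is empty and the bound holds vacuously.

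For the inductive step, I would derive a recursion for $\Delta(\vx, \vy, t, t+k)$ by case-splitting on the trailing characters $a = x_n$ and $b = y_{n+k}$. Writing $\vx = \vx' a$ and $\vy = \vy' b$, any common subsequence of length $n - t$ can be classified by whether its rightmost embeddings in $\vx$ and in $\vy$ use the trailing positions. Adapting the subsequence-counting recurrences of Elzinga~\etal~\cite{Elzinga.2008}, this yields an expression for $\Delta(\vx, \vy, t, t+k)$ in terms of $\Delta$-values on the shortened pairs $(\vx', \vy)$, $(\vx, \vy')$, and $(\vx', \vy')$, with an extra term appearing when $a = b$ (corresponding to common subsequences that end in $a$ with the rightmost embedding saturated on both sides). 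Each shortened pair inherits a Levenshtein-distance bound of either $\ell$ or $\ell - 1$, since removing a trailing character can drop $d_L$ by at most one; this is exactly the point where the parameter $\ell$ in the inductive hypothesis may shift.

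The final combinatorial step is to assemble these sub-bounds into the target. Applying the induction hypothesis with the appropriate $(\ell', t', k')$ on each summand and combining via the Pascal-type identity
\[
\binom{k+2\ell}{\ell}=\binom{(k+1)+2(\ell-1)}{\ell-1}+\binom{(k-1)+2\ell}{\ell}
\]
(and its variants, possibly applied iteratively) should make the sum collapse to $\binom{k+2\ell}{\ell}/(t-\ell)!\cdot n^{t-\ell}$, with the elementary estimate $(n-1)^{t-\ell}\le n^{t-\ell}$ absorbing the length shift. The principal obstacle, I expect, is two-fold: first, selecting the precise form of the recursion so that the resulting binomial identity closes with the \emph{sharp} coefficient $\binom{k+2\ell}{\ell}$ rather than a larger one; and second, correctly tracking which shortened pairs retain $d_L \ge \ell$ versus dropping to $d_L \ge \ell - 1$, as this determines which binomial coefficient governs each term. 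Both difficulties are combinatorial rather than analytic, and the Elzinga framework is precisely what makes them tractable.
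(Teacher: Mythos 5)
Your plan follows essentially the same route as the paper's proof: an induction driven by a last-character recursion adapted from Elzinga \etal{}, with base cases $\ell=0$, $t<\ell$, and $n=t$, closed by exactly the Pascal identity $\binom{k+2\ell-1}{\ell}+\binom{k+2\ell-1}{\ell-1}=\binom{k+2\ell}{\ell}$ that you display. However, the two ``principal obstacles'' you flag are not incidental bookkeeping; they are the substance of the proof, and as stated your outline would not close without resolving them, so let me record how the paper does it. First, the recursion for $\Delta$ is not the four-term inclusion--exclusion you sketch (the three shortened pairs plus an extra term when $a=b$). Instead one partitions $\cD(\vx,\vy,t,t+k)$ according to the last bit of the \emph{common subsequence}, which gives a two-term bound in each case (Lemma~\ref{lem:recur-ball}): when $x_n=y_{n+k}$ only the pair $(\vx^{(n-1)},\vy^{(n+k-1)})$ appears, once with radii $(t,t+k)$ and once with $(t-1,t+k-1)$; when $x_n\neq y_{n+k}$ only the pairs $(\vx,\vy^{(n+k-1)})$ and $(\vx^{(n-1)},\vy)$ appear. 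A recursion with more positive summands would overshoot the sharp coefficient $\binom{k+2\ell}{\ell}$. Second, and more critically, the blanket estimate ``removing a trailing character can drop $d_L$ by at most one'' is too weak: if the prefix pair in the equal-last-bit case were only guaranteed $d_L\ge\ell-1$, the inductive bound for that term would be of order $n^{t-\ell+1}$, a full power of $n$ too large. The paper's Lemma~\ref{lem:recur-dist} supplies the sharper facts that make the induction close: when $x_n=y_{n+k}$ the pair $(\vx^{(n-1)},\vy^{(n+k-1)})$ still satisfies $d_L\ge\ell$ (no drop at all); when $x_n\neq y_{n+k}$ and $k>0$, the pair $(\vx,\vy^{(n+k-1)})$ also retains $d_L\ge\ell$ and only $(\vx^{(n-1)},\vy)$ drops to $\ell-1$; the case $k=0$ is the one place where both pairs drop to $\ell-1$, and there the identity $2\binom{2\ell-1}{\ell}=\binom{2\ell}{\ell}$ rescues the constant. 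With these two lemmas in place, your proposed well-founded ordering on $(n,t+k,\ell)$ does strictly decrease along every summand (the paper instead uses the lexicographic order on $(\ell,t,k)$ with an inner induction on $n$, but either works), so the overall scheme is sound once the two gaps above are filled.
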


Note that in Theorem~\ref{thm:upper-ind}, we extend the domain to include the cases where $t < \ell$. 
We justify this in the following lemma where we demonstrate $\sS(\ell,t,k)$ for certain bases cases.

\begin{lemma}[Base Cases] \label{lem:base}
The following instances are true.
\begin{enumerate}[(i)]
	\item $\sS(0,t,k)$ is true for all $t,k\ge 0$.
	\item $\sS(\ell,t,k)$ is true for all $0\le t < \ell$.
\end{enumerate}	
\end{lemma}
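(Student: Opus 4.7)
The plan is to handle both parts directly from the definitions, without invoking any recursion at this base stage.

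For part~(i), with $\ell=0$ the target bound reduces to $\Delta(\vx,\vy,t,t+k) \leq n^t/t!$ for every $n\geq t$ and every pair $\vx,\vy$ with $|\vy|-|\vx|=k$. I would first use the containment $\cD(\vx,\vy,t,t+k) \subseteq \cD_t(\vx)$ to obtain $\Delta(\vx,\vy,t,t+k) \leq D(n,t)$, and then establish the uniform bound $D(n,t)\leq n^t/t!$. Although the remark after Theorem~\ref{thm-main} mentions only the asymptotic version of this inequality, the inductive step of Section~\ref{sec:upper} will apply the statement at arbitrary smaller sub-instances, so a sharp non-asymptotic bound is needed. Vandermonde's identity delivers this in one line: since
\[
\binom{n}{t} \;=\; \sum_{i=0}^t \binom{n-t}{i}\binom{t}{t-i} \;\geq\; \sum_{i=0}^t \binom{n-t}{i} \;=\; D(n,t),
\]
and the elementary bound $\binom{n}{t}\leq n^t/t!$ holds, we obtain $D(n,t)\leq n^t/t!$ as required.

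For part~(ii), I would appeal directly to the definition of Levenshtein distance given earlier in the preliminaries. The hypothesis $d_L(\vx,\vy)\geq \ell$, combined with the characterisation
\[
d_L(\vx,\vy) \;=\; \min\{t'\geq 0 : \cD(\vx,\vy,t',t'+k)\neq \varnothing\},
\]
forces $\cD(\vx,\vy,t',t'+k)=\varnothing$ for every $0\leq t' < \ell$. Since the assumption $t<\ell$ places $t$ in precisely this range, the intersection $\cD(\vx,\vy,t,t+k)$ is empty and hence $\Delta(\vx,\vy,t,t+k)=0$.

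Neither case is technically deep, and the only step requiring any care is the non-asymptotic bound $D(n,t)\leq n^t/t!$; this is where I would use Vandermonde rather than the purely asymptotic estimate in~\eqref{eq:l0}, so that the bound can be reused uniformly by the induction in Section~\ref{sec:upper}.
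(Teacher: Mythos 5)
Your proposal is correct and follows essentially the same route as the paper: part (i) via the containment $\cD(\vx,\vy,t,t+k)\subseteq\cD_t(\vx)$ together with the bound $D(n,t)\le n^t/t!$, and part (ii) directly from the definition of Levenshtein distance. The only difference is that you justify the non-asymptotic inequality $D(n,t)\le n^t/t!$ explicitly via Vandermonde's identity, whereas the paper obtains it from its stated convention on the big-$O$ notation in \eqref{eq:l0}; your version is a welcome (and correct) piece of added rigor, but it is not a different argument.
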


\begin{proof}
For (i), we have $\ell=0$. Since $\cD(\vx,\vy , t, t+k)\subseteq \cD_t(\vx)$, 
we apply \eqref{eq:l0} to have $\Delta(\vx,\vy , t, t+k)\le n^t /t!$\,.
For (ii), since $t<\ell$, it follows from the definition of Levenshtein distance that $\cD(\vx,\vy , t, t+k)=\varnothing$.
In other words, $\Delta(\vx,\vy , t, t+k) = 0$.
\end{proof}

Hence, it remains to demonstrates the induction step.
To this end, we define a total order on the set of triples $\{(\ell,t,k) : \ell,t,k\ge 0\}$. 
Specifically, we use $\prec$ to denote the usual {\em lexicographic order} on the triples.
That is, $(\ell,t,k)\prec(\ell_0,t_0,k_0)$ means one of the following:
\begin{itemize}
	\item $\ell < \ell_0$, or
	\item $\ell = \ell_0$ and $t < t_0$, or
	\item $\ell = \ell_0$, $t = t_0$ and $k<k_0$.
\end{itemize}
It is well-known that the lexicographic order defines a total order on the set of triples. Hence, in any nonempty subset of triples, there is always a smallest triple with respect to $\prec$.

Now, we are ready to state the induction step.

\begin{lemma}[Induction Step] \label{lem:induction}
	Suppose that $0<\ell_0\le t_0$ and $k_0\ge 0$. 
	If $\sS(\ell,t,k)$ is true for $(0,0,0) \preceq (\ell,t,k) \prec (l_0,t_0,k_0)$, then $\sS(\ell_0,t_0,k_0)$ is true.
\end{lemma}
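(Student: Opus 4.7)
The plan is to fix the triple $(\ell_0,t_0,k_0)$ and to establish $\sS(\ell_0,t_0,k_0)$ by a secondary induction on $n\ge t_0$, freely invoking the outer hypothesis at every lex-smaller triple. The base case $n=t_0$ is immediate because the only admissible $\vz\in\cD(\vx,\vy,t_0,t_0+k_0)$ is the empty word, so $\Delta(\vx,\vy,t_0,t_0+k_0)\le 1\le\binom{k_0+2\ell_0}{\ell_0}$. For the inductive step ($n>t_0$), I would peel off the leading symbols by writing $\vx=a\vx'$ and $\vy=b\vy'$ with $a,b\in\Sigma$, and split into the cases $a=b$ and $a\ne b$.

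When $a=b$, each $\vz\in\cD(\vx,\vy,t_0,t_0+k_0)$ either starts with $a$---in which case removing this leading $a$ gives a bijection onto $\cD(\vx',\vy',t_0,t_0+k_0)$---or it does not, in which case both leading $a$'s must be deleted and $\vz\in\cD(\vx',\vy',t_0-1,t_0-1+k_0)$. Since $a=b$, prepending $a$ to any common subsequence of $\vx',\vy'$ still produces a common subsequence of $\vx,\vy$, so $d_L(\vx',\vy')\ge d_L(\vx,\vy)\ge\ell_0$. The first summand is then controlled by the inner hypothesis on $n-1$ at the same triple, and the second by the outer hypothesis at the lex-smaller triple $(\ell_0,t_0-1,k_0)$. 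Using the binomial expansion $n^{t_0-\ell_0}=\sum_{j=0}^{t_0-\ell_0}\binom{t_0-\ell_0}{j}(n-1)^{j}$, the sum of these two bounds is at most $\binom{k_0+2\ell_0}{\ell_0}n^{t_0-\ell_0}/(t_0-\ell_0)!$, which closes this case.

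When $a\ne b$, a $\vz$ that begins with $a$ cannot use the leading $b$ of $\vy$ in its embedding and must therefore be a subsequence of $(\vx,\vy')$, and symmetrically a $\vz$ that begins with $b$ must be a subsequence of $(\vx',\vy)$; this yields
\[
\Delta(\vx,\vy,t_0,t_0+k_0)\le\Delta(\vx,\vy',t_0,t_0+k_0-1)+\Delta(\vx',\vy,t_0-1,t_0+k_0).
\]
A short argument based on the fact that restricting an argument to a subsequence can only decrease the LCS gives $d_L(\vx,\vy')\ge\ell_0$ and $d_L(\vx',\vy)\ge\ell_0-1$. For $k_0\ge 1$, the two summands fall under the outer hypothesis at the lex-smaller triples $(\ell_0,t_0,k_0-1)$ and $(\ell_0-1,t_0-1,k_0+1)$ respectively, and the recombination uses only the Pascal identity $\binom{k_0+2\ell_0-1}{\ell_0}+\binom{k_0+2\ell_0-1}{\ell_0-1}=\binom{k_0+2\ell_0}{\ell_0}$ together with $(n-1)^{t_0-\ell_0}\le n^{t_0-\ell_0}$. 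The corner case $k_0=0$ needs separate care, since $\vy'$ is then strictly shorter than $\vx$; after swapping arguments in the definition of $d_L$, I would apply $\sS(\ell_0-1,t_0-1,1)$ to both halves and close using the symmetry identity $2\binom{2\ell_0-1}{\ell_0-1}=\binom{2\ell_0}{\ell_0}$.

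The main obstacle I anticipate is the careful bookkeeping of the Levenshtein-distance lower bounds after each peel---in particular the asymmetry in the boundary case $k_0=0$, where stripping a symbol from $\vy$ reverses the ordering of lengths in the definition of $d_L$---together with the verification that at every split the new parameters land strictly inside the domain of either the outer (lex-smaller triple) or inner (same triple, smaller $n$) inductive hypothesis, so that no circular appeal is made. The binomial manipulations themselves are routine once the recursion is in place; the real craft lies in selecting the peeling so that the right-hand side telescopes cleanly into the claimed bound.
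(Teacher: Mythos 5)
Your proposal is correct and follows essentially the same route as the paper: a double induction (outer on the lexicographic order of triples, inner on $n$), the same two recursion rules for $\Delta$ and for the Levenshtein distance obtained by peeling a single symbol, the same case split on whether the peeled symbols agree and on $k_0=0$ versus $k_0>0$, and the same closing identities (Pascal's rule and the binomial expansion of $n^{t_0-\ell_0}$ in terms of $(n-1)$). The only cosmetic difference is that you strip the leading symbols whereas the paper strips the trailing ones, which is equivalent by reversal.
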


As the proof of Lemma~\ref{lem:induction} is fairly technical, we defer the detailed arguments to Subsection~\ref{sec:induction}. 
In what follows, we assume that the lemma is true and complete the induction proof of Theorem~\ref{thm:upper-ind}.
 
\begin{proof}[Proof of Theorem~\ref{thm:upper-ind}]
Suppose otherwise that $\sS(\ell,t,k)$ fails to hold for some triple. 
We choose the smallest such triple $(\ell_0,t_0,k_0)$ with respect to the order $\prec$. 
Since the triple is smallest, we have that $\sS(\ell,t,k)$ is true for all $(\ell,t,k) \prec (l_0,t_0,k_0)$. Furthermore, Lemma~\ref{lem:base} implies that $\ell_0>0$ and $t\ge\ell_0$.

Therefore, the conditions of Lemma~\ref{lem:induction} are met and so, $\sS(\ell_0,t_0,k_0)$ must be true, contradicting our assumption.
\end{proof}

For the rest of this section, we prove the induction step, and we adopt for the following convention. 
For $\vx\in\Sigma^n$, we write $\vx$ as $x_1x_2\cdots x_n$.
In other words, for $1\le i\le n$, the $i$th bit of $\vx$ is denoted by $x_i$.
Furthermore, the length-$i$ prefix of $\vx$ is denoted by $\vx^{(i)}$.
That is, $\vx^{(i)}=x_1x_2\cdots x_i$.
Similarly, for $k\ge 0$, we consider a binary sequence $\vy\in \Sigma^{n+k}$ and let 
$\vy=y_1y_2\cdots y_{n+k}$.

\subsection{Recursion Rules}

Our induction relies on two recursion rules, Lemmas~\ref{lem:recur-dist} and~\ref{lem:recur-ball}. 
To state the recursion rules, we use the following notation used extensively in \cite{Gabrys.2018}.
Given a bit $a\in \Sigma$ and a set $\cX$ of nonempty binary sequences, we use $\cX_a$ to denote the set of sequences in $\cX$ that ends with $a$. Also, we use $\cX \circ a$ to denote the set of sequences obtained by appending $a$ to all sequences in $\cX$. Hence, $|\cX\circ a|=|\cX|$ while $|\cX_a|\le |\cX|$.

Then the following result is folklore.

\begin{lemma} \label{last-bit}
	Given $\vx \in \Sigma^n$ and $a\in\Sigma$, let $i$ be the largest index integer such that $x_i = a$, then we have that $\left(\cD_t(\vx)\right)_a = \cD _{t-(n-i)}\left(\vx^{(i-1)}\right) \circ a$.
	Therefore, the deletion ball centered at $\vx$ can be recursively computed using the rule:
	\[
	\cD_t(\vx) = \left(\cD_t\left(\vx^{(n-1)}\right)\circ x_n\right)  \cup \cD_{t-1}(\vx^{(n-1)})_{\overline{x_n}}
	\]
\end{lemma}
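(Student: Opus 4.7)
The plan is to establish the displayed equality $(\cD_t(\vx))_a = \cD_{t-(n-i)}(\vx^{(i-1)}) \circ a$ by double containment, and then obtain the recursion by partitioning $\cD_t(\vx)$ by the last bit. The key structural fact I will lean on repeatedly is that, by the maximality of $i$, the suffix $x_{i+1}x_{i+2}\cdots x_n$ consists entirely of the bit $\overline{a}$, so deleting all $n-i$ of these characters is ``free'' in the sense of ending with $a$.

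For the inclusion $\cD_{t-(n-i)}(\vx^{(i-1)}) \circ a \subseteq (\cD_t(\vx))_a$: given $\vz \in \cD_{t-(n-i)}(\vx^{(i-1)})$, the word $\vz \circ a$ is obtained from $\vx$ by deleting the $n-i$ suffix positions $i+1,\dots,n$, keeping position $i$ (which contributes $x_i=a$), and performing the $t-(n-i)$ deletions in $\vx^{(i-1)}$ that produce $\vz$; this is a total of $t$ deletions from $\vx$, and the result ends in $a$. For the reverse inclusion $(\cD_t(\vx))_a \subseteq \cD_{t-(n-i)}(\vx^{(i-1)}) \circ a$: any $\vw \in (\cD_t(\vx))_a$ is realized by some increasing index sequence $j_1<\cdots<j_{n-t}$ of $\{1,\dots,n\}$ with $x_{j_{n-t}}=a$, which forces $j_{n-t}\le i$. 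Replacing $j_{n-t}$ by $i$ (still compatible with the strict ordering because $j_{n-t-1}<j_{n-t}\le i$, and producing the same word because $x_i=a$) exhibits $\vw$ as the concatenation of a subsequence of $\vx^{(i-1)}$ of length $n-t-1$ with $a$. Counting deletions, that prefix is obtained from $\vx^{(i-1)}$ by $(i-1)-(n-t-1) = t-(n-i)$ deletions, giving the desired membership.

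Finally, for the recursion, partition $\cD_t(\vx) = (\cD_t(\vx))_{x_n} \cup (\cD_t(\vx))_{\overline{x_n}}$ according to the last bit (the case $n-t=0$ is trivial). Applying the equality just proved with $a=x_n$, which forces $i=n$, yields $(\cD_t(\vx))_{x_n} = \cD_t(\vx^{(n-1)}) \circ x_n$. For the other piece, a length-$(n-t)$ subsequence of $\vx$ ending in $\overline{x_n}$ cannot use position $n$ as its final index; hence $x_n$ must be deleted, and the remaining $t-1$ deletions occur inside $\vx^{(n-1)}$, giving $(\cD_t(\vx))_{\overline{x_n}} = \cD_{t-1}(\vx^{(n-1)})_{\overline{x_n}}$. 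Combining these two identities produces exactly the stated recursion. The entire argument is elementary, with the only subtle point being the ``index-swap'' in the reverse inclusion, which is what makes the maximality of $i$ essential; I expect no real technical obstacle beyond being careful with the off-by-one in the deletion count.
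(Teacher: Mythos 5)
Your proof is correct. Note that the paper offers no proof of this lemma at all---it is introduced as ``folklore''---so there is no argument of the authors' to compare against; your double-containment argument, with the index-swap $j_{n-t}\mapsto i$ justified by the maximality of $i$ and the deletion count $(i-1)-(n-t-1)=t-(n-i)$, is a complete and correct verification, and the derivation of the recursion from the partition by last bit is also sound. The only points left to convention are the degenerate cases (if $a$ does not occur in $\vx$, or if $t<n-i$, or if $t=n$ so that the only subsequence is the empty string, both sides are empty), and these do not affect correctness.
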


Our first recursion rule provides a lower bound on the Levenshtein distance and is simple modification of the usual recursion rules used in dynamic programming (see for example, \cite{Elzinga.2008}).

\begin{lemma} \label{lem:recur-dist}
	Suppose that $\vx \in \Sigma^n$ and $\vy \in \Sigma^{n + k}$ with $k\ge 0$ and 
	$d_L(\vx, \vy) \geq \ell$.
	\begin{itemize}
	\item When $x_n = y_{n+k}$,
	\begin{equation} \label{eq:dist-last-equal}
		d_L(\vx^{(n-1)}, \vy^{(n+k-1)}) \geq \ell\,.
	\end{equation}
	\item When $x_n \neq y_{n+k}$ and $k=0$,
	\begin{equation} \label{eq:dist-last-diff-k0}
		d_L(\vx^{(n-1)}, \vy) \geq \ell-1,~~~ d_L(\vx, \vy^{(n+k-1)}) \geq \ell-1.
	\end{equation}
	\item When $x_n \neq y_{n+k}$ and $k>0$,
	\begin{equation} \label{eq:dist-last-diff-k1}
		d_L(\vx^{(n-1)}, \vy) \geq \ell-1,~~~~~d_L(\vx, \vy^{(n+k-1)})) \geq \ell.
	\end{equation}
	\end{itemize}
\end{lemma}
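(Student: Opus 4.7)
The plan is to reformulate the hypothesis and each conclusion in terms of longest common subsequences, for which the condition $\cD(\vx,\vy,t,t+k)\ne\varnothing$ has a clean combinatorial meaning, and then dispatch each of the three displays by contradiction. Specifically, for any binary strings $\vu,\vv$ with $|\vu|\le|\vv|$, the condition $\cD_{t}(\vu)\cap \cD_{t+(|\vv|-|\vu|)}(\vv)\ne \varnothing$ is equivalent to the existence of a common subsequence of $\vu$ and $\vv$ of length $|\vu|-t$. Writing $L(\vu,\vv)$ for the length of a longest common subsequence, we therefore have $d_L(\vu,\vv)=|\vu|-L(\vu,\vv)$, and the hypothesis $d_L(\vx,\vy)\ge \ell$ becomes the single inequality $L(\vx,\vy)\le n-\ell$.

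For \eqref{eq:dist-last-equal}, I would suppose for contradiction that $d_L(\vx^{(n-1)},\vy^{(n+k-1)})<\ell$. Via the reformulation, this yields a common subsequence $\vz$ of $\vx^{(n-1)}$ and $\vy^{(n+k-1)}$ of length at least $(n-1)-(\ell-1)=n-\ell$. Appending the shared last bit $x_n=y_{n+k}$ to $\vz$ then produces a common subsequence of $\vx$ and $\vy$ of length $n-\ell+1$, contradicting $L(\vx,\vy)\le n-\ell$.

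For the remaining displays the driving observation is that a prefix is itself a subsequence, so any common subsequence of $\vx^{(n-1)}$ and $\vy$ is a fortiori a common subsequence of $\vx$ and $\vy$. Consequently $L(\vx^{(n-1)},\vy)\le L(\vx,\vy)\le n-\ell$, which rearranges to $d_L(\vx^{(n-1)},\vy)\ge (n-1)-(n-\ell)=\ell-1$. A symmetric argument with $\vy^{(n+k-1)}$ in place of $\vx^{(n-1)}$ gives the remaining inequality in \eqref{eq:dist-last-diff-k0} and the first inequality in \eqref{eq:dist-last-diff-k1}. The second inequality in \eqref{eq:dist-last-diff-k1} is the one sub-case that genuinely uses $k>0$: because $|\vy^{(n+k-1)}|=n+k-1\ge n=|\vx|$, the string $\vx$ remains the shorter argument, so $d_L(\vx,\vy^{(n+k-1)})=n-L(\vx,\vy^{(n+k-1)})\ge n-L(\vx,\vy)=d_L(\vx,\vy)\ge \ell$, and no $-1$ appears.

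The main obstacle is less the argument than the careful bookkeeping of which argument of $d_L$ is the shorter one. Truncating the shorter string by one character costs $1$ in the lower bound on $d_L$, whereas truncating the longer string costs nothing; this is precisely what separates the $k=0$ and $k>0$ regimes of the mismatched case and explains why the bound strengthens from $\ell-1$ to $\ell$ in the second half of \eqref{eq:dist-last-diff-k1}. Once this length bookkeeping is isolated, the three displays follow uniformly from the LCS reformulation together with the two elementary facts that prefixes are subsequences and that appending equal letters extends any common subsequence by $1$.
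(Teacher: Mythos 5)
Your proof is correct and, modulo the translation from deletion-ball intersections to longest common subsequences (an exact repackaging of the paper's definition, since $\cD(\vu,\vv,t,t+k)\ne\varnothing$ precisely when $\vu$ and $\vv$ share a subsequence of length $|\vu|-t$), it is the same argument the paper gives: a shared last bit extends a common subsequence by one, and passing to a prefix only shrinks the set of common subsequences while lowering the reference length by at most one. The length bookkeeping you isolate to separate the $k=0$ and $k>0$ regimes is exactly what the paper's proof does when it treats the emptiness of $\cD(\vx,\vy^{(n+k-1)},\ell-1,\ell+k-2)$ differently in the two cases.
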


\begin{proof}

If $x_{n} = y_{n+k}$, we claim that $\cD(\vx^{(n-1)}, \vy^{(n+k-1)},$ $\ell-1, \ell+k-1) = \varnothing$. Suppose otherwise that $\vz$ belongs to $\cD(\vx^{(n-1)}, \vy^{(n+k-1)}, \ell-1, \ell+k-1)$, then we have $\vz\circ~x_n = \vz \circ y_{n+k} 		\in \cD(\vx, \vy, \ell-1, \ell+k-1)$, which is a contradiction.
\vspace{1mm}

If $x_{n} \ne  y_{n+k}$, we claim  $\cD(\vx^{(n-1)}, \vy, \ell-2, \ell{+}k{-}1){=}\varnothing$. Again, suppose otherwise that $\vz\in \cD(\vx^{(n-1)}, \vy, \ell{-}2, \ell{+}k{-}1)$. Then we have $\vz \in \cD(\vx, \vy, \ell{-}1, \ell{+}k{-}1)$ because $\cD_{\ell-2} (\vx^{(n-1)}) \subseteq \cD_{\ell-1}(\vx)$. This is a contradiction.
\begin{itemize}
\item When $k=0$, we have $\cD(\vx, \vy^{(n+k-1)}, \ell-1, \ell+k-2) = \varnothing$ by symmetry. 
		
\item When $k> 0$, 
we claim $\cD(\vx, \vy^{(n+k-1)}, \ell{-}1, \ell{+}k{-}2) {=} \varnothing$. Again, suppose that $\vz {\in} \cD(\vx, \vy^{(n+k-1)}, \ell{-}1, l{+}k{-}2)$. Then $\vz \in \cD(\vx, \vy, \ell-1, l+k-1)$ because $\cD_{l+k-2}(\vy^{(n+k-1)}) \subseteq \cD_{l+k-1}(\vy)$. This  is a contradiction.
\qedhere
\end{itemize}
	
\end{proof}

The next recursion rule is crucial to our inductive proof.
In particular, we show that we can bound the size of $\cD(\vx,\vy,t,t+k)$ using the corresponding values for the prefixes of $\vx$ and $\vy$.

\begin{lemma} \label{lem:recur-ball}
	Let $\vx \in \Sigma^n$, $\vy \in \Sigma_2^{n+k}$. Then the following are~true.
	
	\begin{itemize}
	\item If $x_n = y_{n+k}$, then
	\begin{align*} 
	\Delta(\vx, \vy, t, t+k) & \leq \Delta(\vx^{(n-1)}, \vy^{(n+k-1)}, t, t+k) \\
	&\hspace{2mm}+ \Delta(\vx^{(n-1)}, \vy^{(n+k-1)}, t-1, t+k-1)\,.
	\end{align*}
	\item If $x_n \neq y_{n+k}$,
	\begin{align*} 
	\Delta(\vx, \vy, t, t+k) & \leq
		\Delta(\vx, \vy^{(n+k-1)}, t, t+k-1) \\
	&\hspace{20mm} + \Delta(\vx^{(n-1)}, \vy, t-1, t+k)\,.
	\end{align*}
	\end{itemize}
\end{lemma}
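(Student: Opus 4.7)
\emph{Proof plan.}
My plan is to partition $\cD(\vx,\vy,t,t+k)$ according to the last bit of each $\vw$ in the intersection and to bound each part separately by an intersection of deletion balls taken on prefixes of $\vx$ and $\vy$. The whole argument rests on a single elementary last-bit observation (essentially the set-form of Lemma~\ref{last-bit}): for any $\vw\in\cD_t(\vx)$ whose last bit is $a$, if $x_n=a$ then one can always choose an embedding of $\vw$ into $\vx$ that uses $x_n$ as its final symbol, so $\vw=\vu\circ x_n$ with $\vu\in\cD_t(\vx^{(n-1)})$; conversely, if $x_n\ne a$ then no embedding of $\vw$ into $\vx$ can use $x_n$, hence $\vw$ itself is a length-$(n-t)$ subsequence of $\vx^{(n-1)}$, i.e., $\vw\in\cD_{t-1}(\vx^{(n-1)})$. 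The analogue with $(\vx,n,t)$ replaced by $(\vy,n+k,t+k)$ follows identically.

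In the case $x_n=y_{n+k}=a$, a $\vw=\vu\circ a$ ending in $a$ is mapped injectively, by the observation applied to both $\vx$ and $\vy$, to $\vu\in\cD_t(\vx^{(n-1)})\cap\cD_{t+k}(\vy^{(n+k-1)})$, and this piece therefore contributes at most $\Delta(\vx^{(n-1)},\vy^{(n+k-1)},t,t+k)$. A $\vw$ ending in $\overline{a}$ satisfies $\vw\in\cD_{t-1}(\vx^{(n-1)})\cap\cD_{t+k-1}(\vy^{(n+k-1)})$, because neither embedding can use its last symbol (both $x_n$ and $y_{n+k}$ equal $a\ne\overline{a}$), so this piece contributes at most $\Delta(\vx^{(n-1)},\vy^{(n+k-1)},t-1,t+k-1)$. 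Summing gives the first inequality.

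In the case $x_n\ne y_{n+k}$, say $x_n=0$ and $y_{n+k}=1$, a $\vw$ ending in $0$ cannot use $y_{n+k}$ for its last bit, so $\vw\in\cD_t(\vx)\cap\cD_{t+k-1}(\vy^{(n+k-1)})=\cD(\vx,\vy^{(n+k-1)},t,t+k-1)$; symmetrically, a $\vw$ ending in $1$ cannot use $x_n$, so $\vw\in\cD_{t-1}(\vx^{(n-1)})\cap\cD_{t+k}(\vy)=\cD(\vx^{(n-1)},\vy,t-1,t+k)$. Summing gives the second inequality. I do not foresee any real obstacle: the argument is purely combinatorial bookkeeping, and the single check to run at every step is that the two deletion balls being intersected consist of length-$(n-t)$ sequences, which is automatic from $(n-1)-(t-1)=n-t=(n+k-1)-(t+k-1)=(n+k)-(t+k)$.
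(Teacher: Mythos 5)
Your proof is correct and follows essentially the same route as the paper: both partition $\cD(\vx,\vy,t,t+k)$ by the last bit of each element and invoke the last-bit recursion (Lemma~\ref{last-bit}) to map each part into an intersection of deletion balls of the appropriate prefixes. Your write-up just makes explicit the injectivity of stripping the final symbol and the length bookkeeping, which the paper leaves implicit.
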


\begin{proof}In both cases, we apply Lemma~\ref{last-bit}.

When $\vx_n = \vy_{n+k}$, 
\begin{align*}
	&\Delta(\vx, \vy, t, t+k)\\ 
	& = |\cD(\vx, \vy, t, t+k)_{x_n}| 
	+ |\cD(\vx, \vy, t, t+k)_{\overline{x_n}}| \\
	& \le \Delta(\vx^{(n-1)}, \vy^{(n+k-1)}, t, t+k) \\
	&\hspace{1cm}+ \Delta(\vx^{(n-1)}, \vy^{(n+k-1)}, t-1, t+k-1)\,.
\end{align*}

When $\vx_n \neq \vy_{n+k}$, 
\begin{align*}
	&\Delta(\vx, \vy, t, t+k)\\
	& = |\cD(\vx, \vy, t, t+k)_{x_n}| + |\cD(\vx, \vy, t, t+k)_{y_n}| \\
	& \le \Delta(\vx, \vy^{(n+k-1)}, t, t+k-1) + \Delta(\vx^{(n-1)}, \vy, t-1, t+k)\,.
\end{align*}
\end{proof}

\subsection{Proof of Induction Step}
\label{sec:induction}

Finally, we prove Lemma~\ref{lem:induction}.
Specifically, we suppose that $0<\ell_0\le t_0$ and $k_0\ge 0$ and assume $\sS(\ell,t,k)$ is true for $(\ell,t,k) \prec (l_0,t_0,k_0)$. Our aim is to show that $\sS(\ell_0,t_0,k_0)$ is true. In other words, we show that \eqref{eq:main} is true for all $n\ge t_0$ and we do so by induction on $n$.

Suppose that $n = t_0$. 
Then the set $\cD(\vx, \vy, t_0, t_0+k_0)$ is a singleton set that comprises the empty string. So, we have:
	\[
	\Delta(\vx, \vy, t_0, t_0+k_0) = 1 \leq \frac{\binom{2\ell_0}{\ell_0}}{(t_0-\ell_0)!} (t_0+1)^{t_0-\ell_0}.
	\]
	The last inequality holds because $(t_0+1)^{t_0-\ell_0} \geq (t_0-\ell_0)^{t_0-\ell_0} \geq (t_0-\ell_0)!$\, .
	
Next, we assume that \eqref{eq:main} is true for all $n \leq n_0$. 
	We will prove that, for $\vx\in \Sigma^{n_0+1}$ and $\vy\in\Sigma^{n_0+k_0+1}$  with $d_L(\vx,\vy) \geq \ell_0$, we have:
	\[
	\Delta(\vx, \vy, t_0, t_0+k_0) \leq \frac{\binom{2\ell_0}{\ell_0}}{(t_0-\ell_0)!} (n_0+1)^{t_0-\ell_0}\,.
	\]

We have the following two cases.

\vspace{1mm}
\noindent{(I)} When $x_{n_0+1} = y_{n_0+k_0+1}$, it follows from \eqref{eq:dist-last-equal} that $d_L(\vx^{(n_0)}, \vy^{(n_0+ k_0)}) \geq \ell_0$.

When $t_0>\ell_0$, Lemma~\ref{lem:recur-ball} implies that
	\begin{align*}
		&\Delta(\vx, \vy, t_0, t_0+k_0) \\
		& \leq \Delta(\vx^{(n_0)}, \vy^{(n_0+k_0)}, t_0, t_0+k_0) \\
		&\hspace{2cm}+ \Delta(\vx^{(n_0)}, \vy^{(n_0+k_0)}, t_0-1, t_0+k_0-1) \\
		& \leq \frac{\binom{k_0+2\ell_0}{\ell_0}}{(t_0-\ell_0)!} n_0^{t_0-\ell_0} + \frac{\binom{k_0+2\ell_0}{\ell_0}}{(t_0-\ell_0-1)!} n_0^{t_0-\ell_0-1} \\
		& = \frac{\binom{k_0+2\ell_0}{\ell_0}}{(t_0-\ell_0)!} \left(n_0^{t_0-\ell_0} + (t_0-\ell_0)n_0^{t_0-\ell_0-1}\right) \\
		& \leq \frac{\binom{k_0+2l_0}{l_0}}{(t_0-l_0)!} (n_0+1)^{t_0-l_0},
	\end{align*}
	as desired. The last inequality follows from the fact that $(n+1)^{t_0-\ell_0}=\sum_{i=0}^{t_0-\ell_0}\binom{t_0-\ell_0}{i}n^{i}$.

	On the other hand, if $t_0=\ell_0$, we have  $\Delta(\vx^{(n_0)}, \vy^{(n_0+k_0)}, t_0-1, t_0+k_0-1)=0$. It is then not difficult to proceed as above and show that 
	$\Delta(\vx, \vy, t_0, t_0+k_0)\le \binom{k_0+2l_0}{l_0}$.
	
\vspace{1mm}


\noindent{(II)} Suppose that $x_{n_0+1} \ne y_{n_0+k_0+1}$.  When $k_0=0$, \eqref{eq:dist-last-diff-k0} implies that 
	$d_L(\vx, \vy^{(n_0)})\ge \ell_0-1$ and $d_L(\vx^{(n_0)}, \vy) \geq \ell_0-1$.
	Again, we apply Lemma~\ref{lem:recur-ball} to have
	\begin{align*}
		& \Delta(\vx, \vy, t_0+1, t_0+1) \\
		& \leq \Delta(\vx, \vy^{(n_0)}, t_0+1, t_0) + \Delta(\vx^{(n_0)}, \vy, t_0, t_0+1) \\
		& \leq \frac{\binom{2\ell_0-1}{\ell_0}}{(t_0-\ell_0+1)!} n_0^{t_0-\ell_0+1} + \frac{\binom{2\ell_0-1}{\ell_0}}{(t_0-\ell_0+1)!} n_0^{t_0-\ell_0+1} \\
		& =  \frac{2\binom{2\ell_0-1}{\ell_0}}{(t_0-\ell_0+1)!} n_0^{t_0-\ell_0+1} \le \frac{\binom{2\ell_0}{\ell_0}}{(t_0-\ell_0+1)!} (n_0+1)^{t_0-\ell_0+1}.
	\end{align*}
	For the last inequality, observe that $2 \binom{2\ell_0-1}{\ell_0} = \binom{2\ell_0}{\ell_0}$. 

On the other hand, when $k_0>0$, \eqref{eq:dist-last-diff-k1} implies that 
$d_L(\vx, \vy^{(n_0+k_0)})\ge \ell_0$ and 
$d_L(\vx^{(n_0)}, \vy) \geq \ell_0-1$.
Again, applying Lemma~\ref{lem:recur-ball}, we have that
\begin{align*}
	& \Delta(\vx, \vy, t_0, t_0+k_0)\\
	& \leq \Delta(\vx, \vy^{(n_0+k_0)}, t_0, t_0+k_0-1) + \Delta(\vx^{(n_0)}, \vy, t_0-1, t_0+k_0) \\
	& \leq \frac{\binom{k_0+2\ell_0-1}{\ell_0}}{(t_0-\ell_0)!} n_0^{t_0-\ell_0} 
			+ \frac{\binom{k_0+2\ell_0-1}{\ell_0-1}}{(t_0-\ell_0)!} n_0^{t_0-\ell_0} \\
	& = \left( \binom{k_0+2\ell_0-1}{\ell_0} + \binom{k_0+2l_0-1}{\ell_0-1} \right) \frac{n_0^{t_0-l_0}}{(t_0-\ell_0)!} \\
	& \leq \frac{\binom{k_0+2\ell_0}{\ell_0}}{(t_0-\ell_0)!} (n_0+1)^{t_0-\ell_0}\,.
\end{align*}

This completes the induction proof.

\section{Lower Bound}\label{sec:lower}

In this section, we prove Proposition~\ref{prop:lower-bound}. 
Specifically, for $n\ge 4\ell-2$ we explicitly construct two length-$n$ sequences $\vX$ and $\vY$ such that the intersection of their $t$-deletion balls has size at least the quantity defined by \eqref{eq:lower_bound}.

To this end, for $\ell>0$, we consider the following two sequences $\vA_\ell$ and $\vB_\ell$ of length $4\ell-2$.
\begin{align*}
	\vA_{\ell} &\triangleq (1010)^{\ell-1} 10\\
	\vB_{\ell} &\triangleq (0110)^{\ell-1} 01. 
\end{align*}

Here, $\vx^\ell$ denotes the concatenation of $\ell$ copies of $\vx$. 
Now, using $\vA_\ell$ and $\vB_\ell$, we construct the desired sequences $\vX$ and $\vY$.
Specifically, let $\vZ$ be an alternating sequence of length $n-4\ell+2$ that starts with one and we set 
\begin{equation*}
	\vX \triangleq \vA_\ell\vZ \mbox{ and } \vY \triangleq \vB_\ell\vZ.
\end{equation*}

Now, to show that $\vX$ and $\vY$ satisfy the conditions of Proposition~\ref{prop:lower-bound}, 
we first consider $\vA_\ell$ and $\vB_{\ell}$ only, and exhibit the required properties.

Now, we show that $\vA_\ell$ and $\vB_{\ell}$ have the required Levenshtein distance.
To this end, we consider the number of runs in both $\vA_\ell$ and $\vB_{\ell}$.
Formally, for $\vx \in \Sigma^n$, a {\em run} of $\vx$ refers to contiguous repetition of the same bit
and we denote the number of runs in $\vx$ with $R(\vx)$.
The following lemma bounds the changes to $R(\vx)$ when we delete bits from $\vx$.

\begin{lemma} \label{lem:change-run}
	Let $\vx \in \Sigma^n$ and suppose that we delete the bit $x_i$ from $\vx$ to obtain $\vx'$. 
	\begin{itemize}
	\item If $i=1$ or $i=n$, then $R(\vx)-1\le R(\vx')\le R(\vx)$.
	\item If $1<i<n$, then $R(\vx)-2\le R(\vx')\le R(\vx)$.
	\end{itemize}
\end{lemma}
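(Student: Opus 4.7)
The plan is to proceed by a direct case analysis on the local pattern of bits around position $i$. The guiding observation is that the runs of $\vx$ are determined by the set of \emph{run boundaries}, i.e., indices $j$ with $x_j\ne x_{j+1}$, and deleting a single bit can only erase or merge adjacent run boundaries; it can never create a new one. This observation immediately yields the upper bound $R(\vx')\le R(\vx)$ in both cases, so the work lies in establishing the respective lower bounds.

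For the endpoint case $i=1$ (the case $i=n$ being symmetric), I would split on whether $x_1=x_2$ or $x_1\ne x_2$. In the first sub-case, $x_1$ lies in the same run as $x_2$, so removing it shortens that run by one without affecting any run boundary, giving $R(\vx')=R(\vx)$. In the second sub-case, $x_1$ forms a singleton run at the left end of $\vx$, and its deletion removes exactly that run while introducing no new boundary, yielding $R(\vx')=R(\vx)-1$. Hence $R(\vx')\in\{R(\vx)-1,R(\vx)\}$, as required.

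For the interior case $1<i<n$, I would enumerate the four sub-cases determined by whether $x_{i-1}=x_i$ and whether $x_i=x_{i+1}$. If $x_i$ agrees with at least one neighbor, then deletion either shortens an existing run by one bit (agreement with exactly one side) or deletes an interior bit of a run of length at least three (agreement with both sides); in either sub-case the set of run boundaries is preserved and $R(\vx')=R(\vx)$. If $x_i$ disagrees with both $x_{i-1}$ and $x_{i+1}$, then binarity forces $x_{i-1}=x_{i+1}$, so $x_i$ is a singleton run flanked by two runs of the same value; after deletion these two flanking runs become adjacent and merge, so the two boundaries at positions $i-1$ and $i$ both disappear and no new boundary is created, giving $R(\vx')=R(\vx)-2$.

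I do not foresee a genuine obstacle; the main thing to be careful about is the bookkeeping in the interior case, together with the explicit invocation of the binary alphabet to rule out the hypothetical pattern in which $x_{i-1}$, $x_i$, $x_{i+1}$ are pairwise distinct, which is exactly what prevents a single interior deletion from losing more than two runs.
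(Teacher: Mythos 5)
Your argument is correct and complete; the paper in fact states Lemma~\ref{lem:change-run} without proof, treating it as evident, and your boundary-counting case analysis (including the explicit use of the binary alphabet to force $x_{i-1}=x_{i+1}$ when $x_i$ disagrees with both neighbours) is exactly the justification the authors implicitly rely on. The only trivial edge case not covered by your sub-cases is $n=1$, where $\vx'$ is empty and the endpoint bound holds vacuously.
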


We are ready to show that $\vA_\ell$ and $\vB_\ell$ are far apart in terms of Levenshtein distance.

\begin{lemma} \label{lm:validate_Lev_dis}
	For all $\ell>0$, we have $d_L(\vA_\ell,\vB_\ell)\ge \ell$.
\end{lemma}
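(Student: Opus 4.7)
The plan is to argue via the run count $R(\cdot)$ and invoke Lemma~\ref{lem:change-run} twice, together with one simple observation about subsequence embeddings. First, I compute the two baseline values $R(\vA_\ell) = 4\ell - 2$ (since $\vA_\ell = (10)^{2\ell-1}$ is strictly alternating) and $R(\vB_\ell) = 2\ell$: the block $(0110)^{\ell-1}$ contributes $2\ell - 1$ runs and ends in a $0$, and appending $01$ extends that trailing $0$ into a $00$-run and creates one new final $1$-run, for a total of $2\ell$. Suppose now, for a contradiction, that $t := d_L(\vA_\ell, \vB_\ell) \leq \ell - 1$, so there exists a common subsequence $\vz \in \cD_t(\vA_\ell) \cap \cD_t(\vB_\ell)$ of length $4\ell - 2 - t$.

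Applying Lemma~\ref{lem:change-run} iteratively to the $t$ deletions that transform $\vA_\ell$ into $\vz$, each step decreases $R$ by at most $2$, so $R(\vz) \geq (4\ell - 2) - 2t$. Viewing $\vz$ instead as obtained from $\vB_\ell$ by deletions and using that runs can only decrease, $R(\vz) \leq R(\vB_\ell) = 2\ell$. Combining these two bounds forces $(4\ell - 2) - 2t \leq 2\ell$, i.e., $t \geq \ell - 1$. Hence $t = \ell - 1$, $R(\vz) = 2\ell$, and every one of the $t$ deletions from $\vA_\ell$ must decrease $R$ by exactly $2$. Since endpoint deletions decrease $R$ by at most $1$, every such deletion is strictly internal; consequently the first bit of $\vA_\ell$ (which is $1$) and its last bit (which is $0$) both survive in $\vz$, so $\vz$ begins with $1$ and ends with $0$.

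The finisher uses the mismatch at the endpoints of $\vB_\ell$. Since the first bit of $\vB_\ell$ is $0$ and its last bit is $1$, any embedding of $\vz$ as a subsequence of $\vB_\ell$ must avoid both extremal positions (otherwise the first or last bit of $\vz$ would disagree with the corresponding bit of $\vB_\ell$). Thus $\vz$ is in fact a subsequence of the middle segment of $\vB_\ell$, namely the string obtained by dropping the first and last bits of $\vB_\ell$, which equals $(1100)^{\ell-1}$ and has only $2\ell - 2$ runs. Monotonicity of $R$ under deletion then yields $R(\vz) \leq 2\ell - 2$, contradicting $R(\vz) = 2\ell$ and completing the proof.

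The one delicate step is the bookkeeping in the middle paragraph that converts the equality $R(\vz) = R(\vA_\ell) - 2t$ into the statement that every deletion from $\vA_\ell$ is internal and hence preserves both endpoints; this is the essential use of the endpoint/interior dichotomy in Lemma~\ref{lem:change-run}, while everything else reduces to a direct run-count tally.
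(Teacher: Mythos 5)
Your proof is correct and rests on the same ingredients as the paper's own argument: the run counts $R(\vA_\ell)=4\ell-2$ and $R(\vB_\ell)=2\ell$, Lemma~\ref{lem:change-run}, and the mismatch of the extremal bits of $\vA_\ell$ and $\vB_\ell$. The paper organizes this as a two-case split on whether the first bit of $\vA_\ell$ is deleted and compares $R(\vA')$ with $R(\vB')$ directly, whereas you extract the same information via a tightness argument (forcing $t=\ell-1$, $R(\vz)=2\ell$, and all deletions internal) and close with the observation that the interior $(1100)^{\ell-1}$ of $\vB_\ell$ has only $2\ell-2$ runs; both routes are valid, with the paper's being marginally shorter.
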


\begin{proof}
	First, we count the number of runs in each sequence. Clearly, $R(\vA_{\ell}) = 4\ell-2$ and $R(\vB_{\ell}) = 2\ell$. 
	
	It remains to show that $\cD(\vA_{\ell}, \vB_{\ell}, \ell-1, \ell-1)=\varnothing$.
	Suppose that we delete $\ell-1$ bits from $\vA_{\ell}$ and $\vB_\ell$ to obtain $\vA'$ and $\vB'$, respectively. 
	We consider two cases.
	
	If we delete the first bit of $\vA_\ell$, then by Lemma~\ref{lem:change-run}, 
	\[R(\vA')\ge 4\ell-2 - 1 - 2(\ell-2)=2\ell+1 > R(\vB_\ell)\ge R(\vB').\]
	Since the number of runs of $\vA'$ is strictly greater than that of $\vB'$,  
	we have $\vA'\ne \vB'$.
	
	On the other hand, suppose that we do not delete the first bit of $\vA_\ell$.
	Then, as before, using Lemma~\ref{lem:change-run}, we have that $R(\vA')\ge 2\ell$. However, since the first bit of $\vA_\ell$ and $\vB_\ell$ differ, we need to delete the first bit of $\vB_\ell$ to obtain a common subsequence. 
	Thus, we assume that $R(\vB')< R(\vB_\ell)$. Again, we have that $R(\vA')>R(\vB')$ and so, $\vA'\ne \vB'$.
\end{proof}

Next, we show that the Levenshtein distance of $\vA_\ell$ and $\vB_\ell$ is indeed $\ell$.
Furthermore, the intersection of the corresponding deletion balls is of size at least $\binom{2\ell}{\ell}$.

\begin{lemma} \label{lem:lower-bound}
	For all $\ell>0$, we have that
		$\cD(\vA_{\ell}, \vB_{\ell}, t, t) \geq \binom{2\ell}{\ell}$.
\end{lemma}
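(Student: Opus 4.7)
The plan is to exhibit $\binom{2\ell}{\ell}$ distinct length-$(3\ell-2)$ common subsequences of $\vA_\ell$ and $\vB_\ell$.

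First I would leverage the alternating structure of $\vA_\ell=(10)^{2\ell-1}$: a length-$(3\ell-2)$ binary string $\vz$ is a subsequence of $\vA_\ell$ if and only if both $n_1(\vz),n_0(\vz)\in[\ell-1,2\ell-1]$ and the run count $R(\vz)$ is at least $2\ell-2$ when $\vz$ begins with $1$, or at least $2\ell-1$ when $\vz$ begins with $0$ (the greedy leftmost embedding reaches position $2n-R$ or $2n-R+1$, so the condition is both necessary and sufficient). This reduces the task to identifying enough length-$(3\ell-2)$ subsequences of the more rigid $\vB_\ell=0(1100)^{\ell-1}1$ that also meet these constraints.

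To land on exactly $\binom{2\ell}{\ell}$, I would invoke Vandermonde's identity $\binom{2\ell}{\ell}=\sum_{a=0}^{\ell}\binom{\ell}{a}^{2}$ and, for each $a\in\{0,\ldots,\ell\}$, construct $\binom{\ell}{a}^{2}$ distinct common subsequences having exactly $\ell-1+a$ ones; since subsequences with different $1$-counts are automatically distinct, summing over $a$ yields the bound. The natural parameterization exploits the fact that $\vA_\ell$ and $\vB_\ell$ both decompose into $\ell$ \emph{disagreement pairs} at positions $(4k{+}1,4k{+}2)$ (where $\vA$ reads $10$ and $\vB$ reads $01$) alternating with $\ell-1$ \emph{agreement pairs} at positions $(4k{+}3,4k{+}4)$ (where both read $10$). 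For a fixed $a$, I would index the $\binom{\ell}{a}^{2}$ candidates by pairs $(I,J)\in\binom{[\ell]}{a}\times\binom{[\ell]}{a}$, where $I$ specifies which disagreement pairs are ``resolved'' as $1$ in the $\vA_\ell$-embedding and $J$ does the same for the $\vB_\ell$-embedding.

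The main obstacle is that the naive pair-respecting deletion scheme — keep one bit of each disagreement pair and both bits of each agreement pair — yields only $2^\ell$ common subsequences, far short of $\binom{2\ell}{\ell}$. The additional subsequences (for instance $0110$ and $1001$ when $\ell=2$) arise from embeddings that straddle pair boundaries: an agreement pair ``donates'' a bit to an adjacent disagreement pair, producing subsequences whose $1$-count is pulled away from the balanced value. Defining $\vz_{a,I,J}$ uniformly, producing an explicit block-wise embedding of each into $\vB_\ell$ that walks through the $1100$ blocks consuming bits in the order prescribed by $J$ (the $\vA_\ell$ side is immediate from the run-count characterization above), and finally verifying pairwise distinctness across all triples $(a,I,J)$, is where the combinatorial bookkeeping becomes delicate. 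An alternative strategy by induction on $\ell$ using $\vA_\ell=1010\cdot\vA_{\ell-1}$ and $\vB_\ell=0110\cdot\vB_{\ell-1}$ appears to recover only $2\binom{2\ell-2}{\ell-1}$ subsequences via prefix/suffix concatenation, so the non-pair-respecting contributions seem unavoidable in any approach.
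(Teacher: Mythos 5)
Your proposal correctly identifies the shape of the answer---the Vandermonde decomposition $\binom{2\ell}{\ell}=\sum_{a}\binom{\ell}{a}^{2}$ really does match the distribution of $1$-counts in a valid witness set, and the partition into disagreement pairs ($01$ in $\vB_\ell$ versus $10$ in $\vA_\ell$) alternating with agreement pairs is exactly the right structure---but the proof itself is missing. You never define the strings $\vz_{a,I,J}$, never exhibit their embeddings into $\vB_\ell$ or $\vA_\ell$, and never prove pairwise distinctness; you explicitly defer all three as ``delicate combinatorial bookkeeping.'' Those three steps \emph{are} the lemma. Moreover, your diagnosis that the subsequences beyond the $2^{\ell}$ obtained by deleting exactly one bit from each disagreement pair must come from embeddings that ``straddle pair boundaries'' points you toward an unnecessarily complicated construction, and it is precisely where the plan stalls.

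The actual resolution is simpler and entirely pair-respecting: let $\cE$ be the set of subsequences of $\vB_\ell$ obtained by deleting any $\ell$ of the $2\ell$ bits lying in the $\ell$ disagreement pairs, with \emph{no quota} of one deletion per pair---a pair may lose $0$, $1$, or $2$ bits. Distinctness of the $\binom{2\ell}{\ell}$ outcomes follows by examining the first pair at which two deletion patterns differ: the surviving strings agree up to the preceding separator and then differ in the next two symbols. Membership of each $\ve\in\cE$ in $\cD_\ell(\vA_\ell)$ follows from an explicit insertion rule (insert $2$, $1$, or $0$ bits into a pair that lost $0$, $1$, or $2$ bits, respectively), together with the count $2\alpha+\beta=\ell$ showing the result has $2\ell-1$ separators and hence equals $(10)^{2\ell-1}=\vA_\ell$. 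Your run-count criterion for subsequences of the alternating word is a serviceable substitute for that insertion argument on the $\vA_\ell$ side, but only once the candidate strings are actually written down. As it stands, the proposal is a plan with the central construction unexecuted, so it does not establish the bound.
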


\begin{proof}
	We partition the indices $\{1,2,\ldots, 4\ell-2\}$ into pairs $\{1,2\}, \{3,4\},\ldots, \{4\ell-1,4\ell-2\}$
	and split $\vA_\ell$ and $\vB_\ell$ according to these pairs. 
	For convenience, we call a $01$-pair and a $10$-pair on these index pairs {\em a $01$-block} and {\em an $10$-separator}, respectively. 
	Hence, $\vA_\ell$ comprises $2\ell-1$ copies of $10$-separators, while $\vB_{\ell}$ comprises $\ell$ {$01$-blocks}  and $(\ell-1)$ {$10$-separators}. We consider the index set $I=\{1\le i\le 4\ell-2: i\equiv 1,2 \pmod{4}\}$. 
	Notice that the $i$th bits $\vA_\ell$ and $\vB_\ell$ are different if and only if $i$ belongs to $I$.
	Furthermore, $I$ corresponds to the $01$-blocks in $\vB_\ell$ and so, $|I|=2\ell$.
	
	We consider the collection of subsequences of $\vB$ obtained by deleting bits from the $01$-blocks of $\vB_\ell$.
	In other words,
	\[\cE \triangleq \{\ve\in\cD_\ell(\vB_\ell): \text{we delete $\ell$ bits whose indices belong to $I$}\}.\]
	Clearly, $\cE\subset \cD_\ell(\vB_\ell)$. 
	In what follows, we show that $|\cE|=\binom{2\ell}{\ell}$ and $\cE\subset \cD_\ell(\vA_\ell)$.
	
	First, let $J$ and $J'$ be two $\ell$-subsets of $I$. 
	Let $\ve$ and $\ve'$ be the resulting subsequences of $\vB_\ell$ obtained by deleting the indices in $J$ and $J'$, respectively.
	We claim that $\ve\ne \ve'$. Indeed, let us consider the smallest block where $J$ and $J'$ differ.
	That is, $i^*=\min \{1\le i\le \ell: (\{4i-3,4i-2\} \cap J)\ne (\{4i-3,4i-2\} \cap J')\}$.
	Then up to the $(i^*-1)$-th separator, both $\ve$ and $\ve'$ coincides. 
	Since $J$ and $J'$ differ, $\ve$ and $\ve'$ differ for the next two bits and so, $\ve\ne \ve'$ (see Example~\ref{exa-AB}(i)).
	So, there are $\binom{2\ell}{\ell}$ choices for $J$, we have that $|\cE|=\binom{2\ell}{\ell}$.
	
	Next, let $\ve\in \cE$. We claim that $\ve\in \cD_{\ell}(\vA_\ell)$.
	To do so, we insert $\ell$ bits into $\ve$ to obtain $\vA_\ell$.
	Now, recall that only bits in the $\ell$ $01$-blocks of $\vB_\ell$ are deleted. To obtain $\vA_\ell$, we insert zero, one or two bits according to the number of deletions in each $01$-block.
	Specifically, we adopt the following rule.
	\begin{enumerate}[(a)]
		\item If no bits of a $01$-block are deleted, we insert {\em two bits} to obtain two additional $10$-separators.
		\item If only one bit of a $01$-block is deleted, we insert {\em one bit} to obtain an additional $10$-separator.
		\item If the entire $01$-block is deleted, we {\em do not insert any bit}.
	\end{enumerate} 
	
	Now, let the number of $01$-blocks with zero, one, and two deletions be $\alpha, \beta$ and $\gamma$, respectively.
	Because there are $\ell$ $01$-blocks deletions, we have $\alpha + \beta+\gamma=\ell$.
	Also, since the number of deletions is $\ell$, we have that $\beta+2\gamma=\ell$.
	 
	Therefore, the number of $10$-separators created is $2\alpha+\beta=2(\alpha+\beta+\gamma)-(\beta+2\gamma)=\ell$ and so, $\alpha+\beta+\gamma=\ell$.
	Together with the remaining $(\ell-1)$ $10$-separators in $\ve$, we have $(2\ell-1)$ $10$-separators which is $\vA_\ell$ (see Example~\ref{exa-AB}(ii)).
\end{proof}

\begin{example}\label{exa-AB}
Let $\ell=4$. Then 
\begin{align*}
	\vA_4 &= \onezero \onezero \onezero \onezero \onezero \onezero \onezero\,, \\
	\vB_4 &= \zeroone \onezero \zeroone \onezero \zeroone \onezero \zeroone\,.
\end{align*}
Here, we underline the $10$-separators and $I=\{1,2,5,6,9,10,13,14\}$.
\begin{enumerate}[(a)]
\item We choose four out of the eight indices in $I$ to delete from $\vB_4$. Consider $J=\{1,2,5,9\}$ and $J'=\{1,2,5,10\}$. Then the resulting subsequences are 
\begin{align*}
	\ve & = \onezero 1\, \onezero 0\, \onezero 
	01\,,\\
	\ve' & = \onezero 1\, \onezero 1\, \onezero 01.
\end{align*}
Then we follow the steps in the proof of Lemma~\ref{lem:lower-bound} to find that $i^*=3$ 
Indeed, up to the second separator, $\ve$ and $\ve'$ share the same prefix $\onezero 1\, \onezero$ and the next two bits of $\ve$ and $\ve'$ differ.
\item We insert four bits into both $\ve$ and $\ve'$ according to the the proof of Lemma~\ref{lem:lower-bound} and obtain $\vA_4$. Indeed, $\ve,\ve'\in \cD_4(\vA_4)$.
\begin{align*}
\vA_4 &= \onezero 1\ins{0}\,\onezero \ins{1}0\, \onezero \ins{1}0\, 1\ins{0}\, ,\\
&=\onezero 1\ins{0}\,\onezero 1\ins{0}\, \onezero \ins{1}0\, 1\ins{0}\, .
\end{align*}
Here, the inserted bits are highlighted in {\color{blue}blue}.
\end{enumerate}

\end{example}

\begin{proof}[Proof of Proposition~\ref{prop:lower-bound}]
	Since $\vX$ and $\vY$ contains $\vA_\ell$ and $\vB_\ell$, respectively, we have that $d_L(\vX,\vY)\ge d_L(\vA_\ell,\vB_\ell)=\ell$.
	
	Next, we provide a lower bound for $D(\vX,\vY,t,t,)$. 
	Recall the definition of $\cE$ given in the proof of Lemma~\ref{lem:lower-bound} and we consider the following set of length-$(n-t)$ sequences.
	\[\cF = \left\{\ve\vf: \ve\in \cE, \cD_{t-\ell}(\vZ)\right\}.\]
	Since $\cE\subseteq \cD(\vA_\ell,\vB_\ell,\ell,\ell)$, we have that $|\cF|\ge \binom{2\ell}{\ell}|\cD_{t-\ell}(\vZ)|=\binom{2\ell}{\ell}D(n-4\ell+2, t-\ell)$, as required.
\end{proof}

\section{Efficient Reconstruction from Channel Outputs}
\label{sec:efficient}


Throughout this section, let $t\ge 2$ and  $\cC$ be an $(t-1)$-deletion-correcting code of length $n$ equipped with a $(t-1)$-deletion-correcting decoder $\dec$. 
Specifically, $\dec$ is a map from $\Sigma^{n-t+1}$ to $\cC\cup\{\fail\}$ such that 
\begin{equation*}
	\dec (\vy) = 
	\begin{cases}
		\vx,& \mbox{if $\vy\in \cD_{t-1}(\vx)$ and $\vx\in \cC$,}\\
		\fail, & \mbox{if $\vy\notin \bigcup_{\vx\in\cC} \cD_{t-1}(\vx)$. }
	\end{cases}
\end{equation*}

In this section, we propose a simple and efficient algorithm that makes use of $\dec$ to recover a transmitted codeword from their noisy outputs. Specifically, we prove a general version of Proposition~\ref{prop:recon}, 
where we consider a subcode $\cX$ of $\cC$.
\newpage

\begin{proposition}\label{prop:efficient-2}
	Suppose that $\cC$ is a $(t-1)$-deletion-correcting code equipped with a decoder $\dec$ that runs in $T(n)$ time.
	
	Consider a subcode $\cX$ with the property that $D(\vx,\vx',t,t) < M$ for all pairwise distinct $\vx, \vx'\in \cX$.
	
	If we are given $M$ distinct outputs $\vy_1,\vy_2,\ldots, \vy_{M}\in \cD_t(\vx)$, then we are able to reconstruct $\vx$ in $O(T(n)+Mn)$ time. 
\end{proposition}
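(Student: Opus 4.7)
The plan is to invoke the decoder $\dec$ only a constant number of times and then verify each returned candidate against the $M$ outputs. The pivotal observation is that for any fixed output $\vy_1 \in \cD_t(\vx)$, there exists a length-$(n-t+1)$ sequence $\vz \in \cD_{t-1}(\vx)$ with $\vy_1 \in \cD_1(\vz)$: indeed, any embedding of $\vy_1$ into $\vx$ leaves $t$ positions of $\vx$ unused, and ``undeleting'' any one of them yields such a $\vz$. By the definition of $\dec$, feeding any such $\vz$ returns $\vx$ in time $T(n)$, so the problem reduces to finding a suitable $\vz$ quickly.

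The algorithm I propose runs in three stages. Stage~(i) fixes one output $\vy_1$ and assembles a short list of length-$(n-t+1)$ candidates via single-bit insertions into $\vy_1$, pruned using a second output $\vy_2 \neq \vy_1$, namely the set $\{\vz \in \Sigma^{n-t+1} : \vy_1, \vy_2 \in \cD_1(\vz)\}$ of common single-bit super-sequences. Stage~(ii) runs $\dec$ on each candidate, collecting every returned codeword $\vx' \in \cC$ (and discarding $\fail$). Stage~(iii) verifies each $\vx'$ by testing $\vy_i \in \cD_t(\vx')$ for all $i \in \{1, \ldots, M\}$ via linear-time subsequence matching, costing $O(Mn)$ in total. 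The subcode hypothesis $|\cD(\vx,\vx',t,t)| < M$ guarantees that only $\vx$ can pass all $M$ consistency checks, so the verified candidate is safely output.

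The enabler of Stage~(i) is the classical combinatorial fact that, for two distinct binary sequences of the same length, the set of their common single-bit super-sequences has $O(1)$ cardinality; this caps the number of decoder calls at a constant and yields the $O(T(n))$ term. The $O(Mn)$ term comes solely from Stage~(iii). The main obstacle, and the technical crux I expect to handle with care, is guaranteeing that this constant-size candidate set actually contains some $\vz \in \cD_{t-1}(\vx)$: certain pairs $(\vy_i,\vy_j)$ may admit common super-sequences that all lie outside $\cD_{t-1}(\vx)$, as can be seen from small examples where two outputs arise from entirely disjoint deletion patterns in $\vx$. The remedy I anticipate is either to choose the second output judiciously (e.g., one maximising the common prefix length with $\vy_1$, so that the deletion patterns overlap) or to fall back to a slightly larger yet still $O(1)$-sized family of insertions into $\vy_1$ alone. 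Once the correctness of the constant-size family is established, the $O(T(n)+Mn)$ complexity follows immediately from the counts above.
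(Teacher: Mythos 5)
Your overall skeleton (two decoder calls on length-$(n-t+1)$ candidates built from two outputs, then an $O(Mn)$ verification pass that uses the hypothesis $D(\vx,\vx',t,t)<M$ to eliminate the impostor) is exactly the paper's, but there is a genuine gap at the one step you yourself flag as the crux, and neither of your proposed remedies closes it. The candidate set you build in Stage~(i), namely $\{\vz\in\Sigma^{n-t+1}: \vy_1,\vy_2\in\cD_1(\vz)\}$, is indeed of size $O(1)$ (at most $2$), but it can be \emph{empty} even though both outputs come from the same codeword: take $\vx=0011$, $t=2$, $\vy_1=00$, $\vy_2=11$; no length-$3$ word contains both $00$ and $11$ as subsequences, so Stages~(ii) and~(iii) have nothing to work with. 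Even when the set is nonempty, you give no argument that it meets $\cD_{t-1}(\vx)$. Your first fallback (choosing $\vy_2$ to maximise prefix overlap) is unsubstantiated, and your second fallback fails on cardinality grounds: the single-insertion supersequences of $\vy_1$ alone number $n-t+2$, i.e.\ $\Theta(n)$, and without a second output there is no $O(1)$-sized subfamily guaranteed to hit $\cD_{t-1}(\vx)$, so this route would cost $O(nT(n))$ decoder calls.

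The paper closes this gap with Lemma~\ref{lem:rec-from-2}, whose candidates are deliberately \emph{not} common supersequences. Writing $\vu$ for the longest common suffix of the two outputs (they differ in the position just before $\vu$ since they are distinct and of equal length), so that $\vy_1=\vB 0\vu$ and $\vy_2=\vC 1\vu$, the lemma shows that at least one of the two asymmetric candidates $\vB 01\vu$ (an insertion into $\vy_1$ only) and $\vC 10\vu$ (an insertion into $\vy_2$ only) lies in $\cD_{t-1}(\vx)$; the proof peels off the suffix via Lemma~\ref{last-bit} and branches on the last bit of the relevant prefix of $\vx$. In the example above these candidates are $001$ and $110$, and indeed $001\in\cD_1(0011)$, whereas your common-supersequence set is empty. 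Replacing your Stage~(i) with these two candidates, and keeping your Stages~(ii) and~(iii) unchanged, yields precisely the paper's algorithm and the claimed $O(T(n)+Mn)$ running time.
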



We provide a high level description of our reconstruction algorithm. First, we pick any two noisy outputs, say, $\vy_1$ and $\vy_2$. Next, we compute two possible candidates $\vx$ and $\vx'$. Finally, we use the remaining $M-2$ outputs to eliminate the incorrect candidate.
The following lemma justifies the first step.

\begin{lemma} \label{lem:rec-from-2}
Let $\vX \in \Sigma^{n}$ and $\vY, \vZ \in \cD_{t}(\vX)$. 
Suppose that $\vU$ is the longest common suffix of $\vY$ and $\vZ$.
Without loss of generality, assume that $\vY = \vB 0 \vU$ and $\vZ = \vC 1 \vU$. 
Then either $\vB 01 \vU \in \cD_{t-1}(\vX)$ or $\vC 10 \vU \in \cD_{t-1}(\vX)$.
\end{lemma}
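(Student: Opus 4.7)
The plan is to pick canonical embeddings of $\vY$ and $\vZ$ into $\vX$ that agree on the common suffix $\vU$, and then exploit the bit disagreement immediately before $\vU$ to reinsert one extra symbol from $\vX$ while remaining a subsequence of $\vX$.

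First, I would let $p_1 < \cdots < p_{|\vU|}$ denote the rightmost embedding of $\vU$ as a subsequence of $\vX$, obtained by matching $\vU$ greedily from right to left in $\vX$. A standard monotonicity property---any embedding of $\vU$ into $\vX$ is dominated coordinate-wise by this rightmost one---lets me replace the $\vU$-tail of any embedding of $\vY$ (resp.\ $\vZ$) by the fixed positions $p_1, \ldots, p_{|\vU|}$ without disturbing the positions used for the preceding bits. Pushing the right-shift one step further, I take $q_Y$ to be the largest position in $\vX$ strictly below $p_1$ with $x_{q_Y}=0$, and $q_Z$ the largest position strictly below $p_1$ with $x_{q_Z}=1$. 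Both exist because $\vY,\vZ\in\cD_t(\vX)$, and after these shifts $\vB$ is embedded entirely at positions strictly below $q_Y$, and $\vC$ entirely at positions strictly below $q_Z$.

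Since $x_{q_Y}\ne x_{q_Z}$, the positions $q_Y$ and $q_Z$ are distinct, and I would split into two symmetric cases. If $q_Y<q_Z$, then the strictly increasing chain consisting of $\vB$'s embedding positions, followed by $q_Y$, $q_Z$, $p_1, \ldots, p_{|\vU|}$, spells out $\vB\,0\,1\,\vU$ as a subsequence of $\vX$ of length $n-t+1$, certifying $\vB 01 \vU \in \cD_{t-1}(\vX)$. The symmetric case $q_Z<q_Y$ uses $\vC$'s embedding together with $q_Z$, $q_Y$, $p_1, \ldots, p_{|\vU|}$ to obtain $\vC 10 \vU \in \cD_{t-1}(\vX)$.

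The main obstacle is the setup step: justifying that one can simultaneously right-align the embedding of $\vU$, and then the single bit preceding $\vU$, in both $\vY$ and $\vZ$ without breaking the embedding of the prefix $\vB$ (resp.\ $\vC$). This rests entirely on the monotonicity fact for rightmost subsequence embeddings in $\vX$. Once the alignment is in place, the case analysis on the relative order of $q_Y$ and $q_Z$ makes the remainder immediate; edge cases where $\vU$, $\vB$, or $\vC$ are empty are handled by standard sentinel conventions (for instance, interpreting ``below $p_1$'' as ``in $\vX$'' when $|\vU|=0$) and do not cause additional difficulty.
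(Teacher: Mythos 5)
Your proof is correct and follows essentially the same route as the paper: your rightmost embedding $p_1<\cdots<p_{|\vU|}$ of $\vU$ corresponds exactly to the paper's decomposition $\vX=\vA\vV$ with $\vV$ the shortest suffix of $\vX$ containing $\vU$, and since $\max(q_Y,q_Z)=p_1-1$ is precisely the last bit of $\vA$, your case split $q_Y<q_Z$ versus $q_Z<q_Y$ coincides with the paper's case split on whether $\vA$ ends in $0$ or $1$. The only difference is presentational: you work with explicit greedy embeddings, while the paper invokes the suffix version of Lemma~\ref{last-bit}, namely $\cD_t(\vX)_{\vU}=\cD_{t-k}(\vA)\circ\vU$, and then appends the last bit of $\vA$.
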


\begin{proof}
Let $\vX = \vA \vV$ where $\vV$ is the shortest suffix of $\vX$ that contains $\vU$. 
Let $k = |\vV| - |\vU|$. 
Then we have $\cD_{t}(\vX)_{\vU} = \cD_{t-k}(\vA) \circ \vU$. 
In other words, $\vB 0$ and $\vC 1$ are in $\cD_{t-k}(\vA)$, or specifically, $\vB 0 \in \cD_{t-k}(\vA)_{0}$ and $\vC1 \in \cD_{t-k}(\vA)_{1}$. Now there are two cases to consider:
\begin{enumerate}[(i)]
	\item If $\vA = \vA' 0$, then $\cD_{t-k}(\vA)_{1} = \cD_{t-k}(\vA' 0)_{1} = \cD_{t-k-1}(\vA')_{1} \subseteq \cD_{t-k-1}(\vA')$. 
	This means $\vC 1 \in \cD_{t-k-1}(\vA')$.
	Hence, $\vC 10 \in \cD_{t-k-1}(\vA)$.
	Since $\cD_{t-k-1}(\vA) \circ \vU = \cD_{t-1}(\vX)_{\vU}$,
	we have that $\vC 10 \vU \in \cD_{t-1}(\vX)$.
	
	\item Similarly, if $\vA = \vA' 1$, then $\cD_{t-k}(\vA)_{0} = \cD_{t-k}(\vA' 1)_{0} = \cD_{t-k-1}(\vA')_{0} \subseteq \cD_{t-k-1}(\vA')$. 
	This means $\vB 0 \in \cD_{t-k-1}(\vA')$.
	Hence, $\vB 01 \in \cD_{t-k-1}(\vA)$.
	Since $\cD_{t-k-1}(\vA) \circ \vU = \cD_{t-1}(\vX)_{\vU}$,
	we have that $\vB 01 \vU \in \cD_{t-1}(\vX)$. \qedhere
\end{enumerate}
\end{proof}

We are now ready to present our algorithm for Proposition~\ref{prop:efficient-2}.
Recall that $\cX$ is a subcode of a $(t-1)$-deletion-correcting code with the property that $D(\vx,\vx',t,t) < M$ for all pairs of codewords $\vx, \vx'\in \cX$.
\vspace{2mm}

\noindent{\sc Input}: $\vy_1, \vy_2, \ldots, \vy_{M}\in\cD_t(\vx)$ for some $\vx\in\cX$\\
\noindent{\sc Output}: $\vx\in \cX$\\[-3mm]
\begin{enumerate}
\item[(1)] We pick two outputs, $\vy_1, \vy_2$, and set $\vu$ to be the longest common suffix of $\vy_1$ and $\vy_2$. Without loss of generality, assume that $\vy_1 = \va_1 0 \vu$ and $\vy_2 = \va_2 1 \vu$. 
\item[(2)] By Lemma~\ref{lem:rec-from-2}, we have two possible scenarios.
\begin{itemize}
	\item If $\va_1 01 \vu \in \cD_{t-1}(\vx)$, then we can recover $\vx$ by using the $(t-1)$-deletion-correcting decoder. Specifically, we set $\vx_1 \gets \dec (\va_1 01 \vu)$.
	\item Similarly, if $\va_2 10 \vu \in \cD_{t-1}(\vx)$, we can also recover $\vx$ using $\dec$. So, we set $\vx_2 \gets \dec (\vy_2' 10 \vu)$.
\end{itemize}
\item[(3)] Finally, to distinguish between the two scenarios, we use the remaining outputs $\vy_3, \vy_4, \ldots, \vy_M$. Specifically, if $\{\vy_3, \vy_4, \ldots, \vy_M\}\subseteq \cD_t(\vx_1)$, we return the codeword $\vx_1$. Otherwise, we return $\vx_2$.
\end{enumerate}

To complete the proof of Proposition~\ref{prop:efficient-2}, we analyse the running time. 
Clearly, Steps 1 and 2 can be completed in $O(n)$ time and $2T(n)$ time respectively. 
For Step 3, we need to determine if $\vy_j$ is a subsequence of $\vx_i$ for $3\le j\le M$ and $i\in\{1,2\}$.
Since each verification can be completed in $O(n)$ time, Step 3 can be completed in $O(Mn)$ time and the proposition follows.

To conclude this section, we discuss the implication of Proposition~\ref{prop:efficient-2} for the case where $t=2$, that is, the channel that introduces two deletions.

Consider a VT code $\cC$ of length $n$.
As mentioned earlier, $\cC$ is a single-deletion-correcting code that is equipped with a linear-time decoder. 
In \cite{Gabrys.2018} (or Theorem~\ref{thm:l2}), Gabrys and Yaakobi showed that we can uniquely reconstruct any codeword of $\cC$ using seven distinct reads. 
Later, Chrisnata and Kiah considered a subcode $\cX$ of $\cC$ with $2\log_2 n + O(\log_2\log_2 n)$ redundant bits and showed that any codeword of $\cC$ can be reconstructed using five distinct reads \cite{Chrisnata.2021}.
In both cases, naively, we can reconstruct the transmitted word in quadratic time by trying all possibilities for the missing two positions.
However, if we apply the algorithms in Propositions~\ref{prop:recon} and~\ref{prop:efficient-2}, we are able to recover the transmitted word in linear time.

\end{document}